\documentclass[conference, twocolumns, usletter]{IEEEtran}
\usepackage{amsmath,amssymb,amsthm,mathrsfs,amsfonts,dsfont,stmaryrd,wasysym,marvosym}
\usepackage{mathtools}
\usepackage{color}
\usepackage{graphicx}  
\usepackage{psfrag, xcolor}
\usepackage{enumitem}
\usepackage{subfigure,cite,epsfig,dblfloatfix,url}

\allowdisplaybreaks[4]
\newtheorem{theorem}{Theorem}
\newtheorem{lemma}{Lemma}  
  
\newtheorem{proposition}{Proposition}   
\newtheorem{definition}{Definition} 
\newtheorem{remark}{Remark} 
\newcommand{\dv}{\mathbf} 
\newcommand{\mc}{\mathcal} 
\newcommand{\mkv}{-\!\!\!\!\minuso\!\!\!\!-}

\newcommand{\bqed}{\tag*{$\blacksquare$}}
\newcommand*{\qedblack}{\hfill\ensuremath{\blacksquare}}

\newcommand{\squeezeup}{\vspace{-1em}}

\newcommand*\xbar[1]{%
    \hbox{%
		 \vbox{%
		 \hrule height 0.5pt 
		 \kern0.5ex
		 \hbox{%
		 \kern-0.1em
		\ensuremath{#1}%
		\kern-0.1em
		}%
		}%
		}%
		} 
		
\graphicspath{{figs/}}

\begin{document}
\fontencoding{OT1}\fontsize{9.4}{11}\selectfont

\title{Optimal Rate-Exponent Region for a Class of Hypothesis Testing Against Conditional Independence Problems}
\author{Abdellatif Zaidi$^{\dagger}$ $^{\ddagger}$  \qquad \qquad Inaki Estella Aguerri$^{\dagger}$ \vspace{0.3cm}\\
 $^{\dagger}$ Paris Research Center, Huawei Technologies, Boulogne-Billancourt, 92100, France\\
$^{\ddagger}$ Universit\'e Paris-Est, Champs-sur-Marne, 77454, France\\
\{abdellatif.zaidi@u-pem.fr, inaki.estella@gmail.com\}
\vspace{-5mm}
}

\maketitle

\begin{abstract}
We study a class of distributed hypothesis testing against conditional independence problems. Under the criterion that stipulates minimization of the Type II error rate subject to a (constant) upper bound $\epsilon$ on the Type I error rate, we  characterize the set of encoding rates and exponent for both discrete memoryless and memoryless vector Gaussian settings.
\end{abstract}

\section{Introduction}

Consider the multiterminal detection system shown in Figure~\ref{fig-distributed-hypothesis-testing}. In this problem, a memoryless vector source $(X,Y_0,Y_1,\hdots,Y_K)$, $K \geq 1$, has joint distribution that depends on two hypotheses, a null hypothesis $H_0$ and an alternate hypothesis $H_1$. A detector that observes directly the pair $(X,Y_0)$ but only receives summary information of the observations $(Y_1,\hdots,Y_K)$ seeks to determine which of the two hypotheses is true. Specifically, Encoder $k$, $1 \leq k \leq K$, which observes an i.i.d. string $Y^n_k$, sends a message $M_k$ to the detector a finite rate of $R_k$ bits per observation over a noise-free channel; and the detector makes its decision between the two hypotheses on the basis of the received messages $(M_1,\hdots,M_K)$ as well as the available pair $(X^n,Y^n_0)$. In doing so, the detector can make two types of error: Type I error (guessing $H_1$ while $H_0$ is true) and Type II error (guessing $H_0$ while $H_1$ is true). The type II error probability decreases exponentially fast with the size $n$ of the i.i.d. strings, say with an exponent $E$; and, classically, one is interested is characterizing the set of achievable rate-exponent tuples $(R_1,\hdots,R_K,E)$ in the regime in which the probability of the Type I error is kept below a prescribed small value $\epsilon$. This problem, which was first introduced by Berger~\cite{B79}, and then studied further in~\cite{AC86,H87,SP92}, arises naturally in many applications (for recent developments on this topic, the reader may refer to~\cite{TC08,ZL15,SWT18,EWZ18,EZW18,LSCT17,SGC18} and references therein). Its theoretical understanding, however, is far from complete, even from seemingly simple instances of it.

\begin{figure}[h!]
\centering
\includegraphics[width=0.7\linewidth]{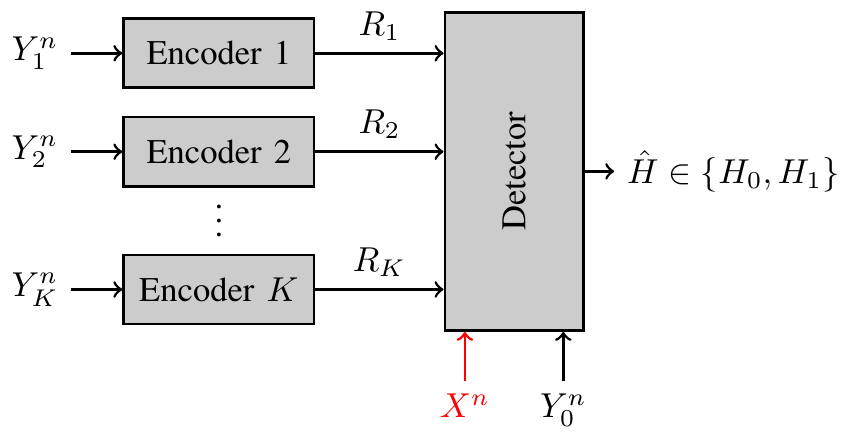}
\caption{Distributed hypothesis testing against conditional independence.} 
\squeezeup
\label{fig-distributed-hypothesis-testing}
\end{figure}

One important such instances was studied by Rahman and Wagner in~\cite{RW12}. In~\cite{RW12}, the two hypotheses are such that  
 $X$ and $(Y_1,\hdots,Y_K)$ are correlated conditionally given $Y_0$ under the null hypothesis $H_0$; and they are independent conditionally given $Y_0$ under the alternate hypothesis $H_1$, i.e.,~\footnote{In fact, the model of~\cite{RW12} also involves a random variable $Y_{K+1}$, which is chosen here to be deterministic as it is not relevant for the analysis and discussion that will follow in this paper (see Remark $2$).}
\begin{subequations}
\begin{align}
\label{distribution-under-null-hypothesis-special-case-ht-against-conditional-independence-DM-Rahman-Wagner}
H_0  &: P_{X,Y_0,Y_1\hdots,Y_K} = P_{Y_0} P_{X,Y_1,\hdots,Y_K|Y_0} \\
H_1  &: Q_{X,Y_0,Y_1\hdots,Y_K} = P_{Y_0} P_{X|Y_0} P_{Y_1,\hdots,Y_K|Y_0}. 
\label{distribution-under-alternate-hypothesis-special-case-ht-against-conditional-independence-DM-Rahman-Wagner}
\end{align}
\label{distributions-under-null-and-alternate-hypotheses-special-case-ht-against-conditional-independence-DM-Rahman-Wagner}
\end{subequations}
Note that $(Y_0,Y_1,\hdots,Y_K)$ and $(Y_0,X)$ have the same distributions under both hypotheses; and the multiterminal problem~\eqref{distributions-under-null-and-alternate-hypotheses-special-case-ht-against-conditional-independence-DM-Rahman-Wagner} is a generalization of the single-encoder test against independence studied by Ahlswede and Csiszar in~\cite{AC86}. For the problem~\eqref{distributions-under-null-and-alternate-hypotheses-special-case-ht-against-conditional-independence-DM-Rahman-Wagner} Rahman and Wagner provided inner and outer bounds on the rate-exponent region which do \textit{not} match in general (see~\cite[Theorem 1]{RW12} for the inner bound and~\cite[Theorem 2]{RW12} for the outer bound). The inner bound of~\cite[Theorem 1]{RW12} is based on a scheme, named Quantize-Bin-Test scheme therein, that is similar to the Berger-Tung distributed source coding scheme~\cite{B77,T78}.

In this paper, we study a class of the hypothesis testing problem~\eqref{distributions-under-null-and-alternate-hypotheses-special-case-ht-against-conditional-independence-DM-Rahman-Wagner} obtained by restricting the joint distribution of the variables under the null hypothesis to satisfy the Markov chain
\begin{equation}
Y_{\mc S} \mkv  (X, Y_0)  \mkv Y_{{\mc S}^c} \quad \forall \:\: \mc S \subseteq \mc K := \{1,\ldots,K\}
\label{markov-chain}
\end{equation}
i.e., the encoders' observations $\{Y_k\}_{k \in \mc K}$ are independent conditionally given $(X,Y_0)$ under $H_0$. We investigate both discrete memoryless (DM) and memoryless vector Gaussian models. For the DM setting, we provide a converse proof and show that it is achieved using the Quantize-Bin-Test scheme of~\cite[Theorem 1]{RW12}. Our converse proof is inspired by that of the rate-distortion region of the Chief-Executive Officer (CEO) problem under logarithmic loss of Courtade and Weissman~\cite[Theorem 10]{CW14}. We note that, prior to this work, for general distributions under the null hypothesis (i.e., without the Markov chain~\eqref{markov-chain} under this hypothesis) the optimality of the Quantize-Bin-Test scheme of~\cite{RW12} for the problem of testing against conditional independence was known only for the special case of a single encoder, i.e., $K=1$, (see~\cite[Theorem 3]{RW12}), a result which can also be recovered from our result in this paper. 

For the vector Gaussian setting too we provide a full characterization of the rate-exponent region. For the proof of the converse of this result, we obtain an outer bound by evaluating our outer bound the DM model by means of a technique that relies on the de Bruijn identity and the properties of Fisher information. In doing so, we show that the Quantize-Bin-Scheme of~\cite[Theorem 1]{RW12} with Gaussian test channels and time-sharing is optimal, thus providing what appears to be the first optimality result for the Gaussian hypothesis testing against conditional independence problem in the vector case even for the single-encoder case, i.e., $K=1$. 
 
\subsection{Notation}

Throughout this paper, we use the following notation. Upper case letters are used to denote random variables, e.g., $X$; lower case letters are used to denote realizations of random variables, e.g., $x$; and calligraphic letters denote sets, e.g., $\mc X$.  The cardinality of a set $\mc X$ is denoted by $|\mc X|$. The closure of a set $\mc A$ is denoted by $\xbar{A}$. The length-$n$ sequence $(X_1,\ldots,X_n)$ is denoted as  $X^n$; and, for integers $j$ and $k$ such that $1 \leq k \leq j \leq n$, the sub-sequence $(X_k,X_{k+1},\ldots, X_j)$ is denoted as  $X_{k}^j$. Probability mass functions (pmfs) are denoted by $P_X(x)=\mathrm{Pr}\{X=x\}$; and, sometimes, for short, as $p(x)$. We use $\mc P(\mc X)$ to denote the set of discrete probability distributions on $\mc X$.  Boldface upper case letters denote vectors or matrices, e.g., $\dv X$, where context should make the distinction clear. For an integer $K \geq 1$, we denote the set of integers smaller or equal $K$ as $\mc K = \{ k \in \mathbb{N} \: : \: 1 \leq k \leq K\}$. For a set of integers $\mc S \subseteq \mc K$, the complementary set of $\mc S$ is denoted by $\mc S^c$, i.e., $\mc S^c = \{k \in \mathbb{N} \: : \: k \in \mc K \setminus \mc S\}$. Sometimes, for convenience we will need to define $\bar{\mc S}$ as $\bar{\mc S} = \{0\} \cup \mc S^c$. For a set of integers $\mc S \subseteq \mc K$; the notation $X_{\mc S}$ designates the set of random variables $\{X_k\}$ with indices in the set $\mc S$, i.e., $X_{\mc S}=\{X_k\}_{k \in \mc S}$. We denote the covariance of a zero mean, complex-valued, vector $\dv X$ by $\mathbf{\Sigma}_{\mathbf{x}} =\mathbb{E}[\mathbf{XX}^{\dag}]$, where $(\cdot)^{\dag}$ indicates conjugate transpose. Similarly, we denote the cross-correlation of two zero-mean vectors $\dv X$  and $\dv Y$ as $\mathbf{\Sigma}_{\mathbf{x},\mathbf{y}} = \mathbb{E}[\mathbf{XY}^{\dag}]$, and the conditional correlation matrix of $\mathbf{X}$ given $\mathbf{Y}$ as $\mathbf{\Sigma}_{\mathbf{x}|\mathbf{y}} = \mathbb{E}\big[\big(\dv X - \mathbb{E}[\dv X|\dv Y]\big)\big(\dv X - \mathbb{E}[\dv X|\dv Y]\big)^{\dag}\big]$ i.e., $\mathbf{\Sigma}_{\mathbf{x}|\mathbf{y}} = \mathbf{\Sigma}_{\mathbf{x}}-\mathbf{\Sigma}_{\mathbf{x},\mathbf{y}}\mathbf{\Sigma}_{\mathbf{y}}^{-1}\mathbf{\Sigma}_{\mathbf{y},\mathbf{x}}$. For matrices $\dv A$ and $\dv B$, the notation $\mathrm{diag}(\dv A, \dv B)$ denotes the block diagonal matrix whose diagonal elements are the matrices $\dv A$ and $\dv B$ and its off-diagonal elements are the all zero matrices. Also, for a set of integers $\mc J \subset \mathbb{N}$ and a family of matrices $\{\dv A_i\}_{i \in \mc J}$ of the same size, the notation $\dv A_{\mc J}$ is used to denote the (super) matrix obtained by concatenating vertically the matrices $\{\dv A_i\}_{i \in \mc J}$, where the indices are sorted in the ascending order, e.g, $\dv A_{\{0,2\}}=[\dv A^{\dag}_0, \dv A^{\dag}_2]^{\dag}$.

\section{Problem Formulation}~\label{secII}

Consider a $(K+2)$-dimensional memoryless source $(X,Y_0,Y_1,\ldots,Y_K)$ with finite alphabet $\mc X \times \mc Y_0 \times \mc Y_1 \times \ldots \times \mc Y_K$. The joint probability mass function (pmf) of $(X,Y_0,Y_1,\ldots,Y_K)$ is assumed to be determined by a hypothesis $H$  that takes one of two values, a null hypothesis $H_0$ and an alternate hypothesis $H_1$. Specifically,  $X$ and $(Y_0,Y_1,\hdots,Y_K)$ are correlated under the null hypothesis $H_0$, with their joint distribution assumed to satisfy the Markov chain
\begin{equation}
Y_{\mc S} \mkv  (X, Y_0)  \mkv Y_{{\mc S}^c} \quad \forall \:\: \mc S \subseteq \mc K := \{1,\ldots,K\}
\label{eq:MKChain_pmf}
\end{equation}
under this hypothesis; and $X$ and $(Y_1,\hdots,Y_K)$ are independent conditionally given $Y_0$ under the alternate hypothesis $H_1$, i.e.,
\begin{subequations}
\begin{align}
\label{distribution-under-null-hypothesis-ht-against-conditional-independence-DM}
H_0  &: P_{X,Y_0,Y_1\hdots,Y_K} = P_{X,Y_0} \prod_{i=1}^K P_{Y_k|X,Y_0} \\
H_1  &: Q_{X,Y_0,Y_1\hdots,Y_K} = P_{Y_0} P_{X|Y_0} P_{Y_1,\hdots,Y_K|Y_0}. 
\label{distribution-under-alternate-hypothesis-ht-against-conditional-independence-DM}
\end{align}
\end{subequations}
Let now $\{(X_i,Y_{0,i},Y_{1,i},\ldots,Y_{K,i})\}^n_{i=1}$ be a sequence of $n$ independent copies of $(X,Y_0,Y_1,\ldots,Y_K)$; and consider the detection system shown in Figure~\ref{fig-distributed-hypothesis-testing}. Here, there are $K$ sensors and one detector. Sensor $k \in \mc K$ observes the memoryless source component $Y^n_k$ and sends a message $M_k = \breve{\phi}^{(n)}_k(Y^n_k)$ to the detector, where the mapping
\begin{equation}
\breve{\phi}^{(n)}_k \: : \:  \mc Y^n_k \rightarrow \{1,\ldots,M^{(n)}_k\}
\end{equation}
designates the encoding operation at this sensor. The detector observes the pair $(X^n,Y^n_0)$ and uses them, as well as the messages $\{M_1,\hdots,M_K\}$ gotten from from the sensors, to make a decision between the two hypotheses, based on a decision rule
\begin{equation}
\breve{\psi}^{(n)} \: : \{1,\ldots,M^{(n)}_1\}\times \ldots \times \{1,\ldots,M^{(n)}_K\}\times \mc X^n \times \mc Y_0^n \rightarrow \{H_0,H_1\}.
\label{decision-rule-hypothesis-testing}
\end{equation}
The mapping~\eqref{decision-rule-hypothesis-testing} is such that $\breve{\psi}^{(n)}(m_1,\hdots,m_K,x^n,y^n_0)=H_0$ if $(m_1,\hdots,m_K,x^n,y^n_0) \in \mc A_n$ and $H_1$ otherwise, with
\begin{equation*}
\mc A_n \subseteq \prod_{k=1}^n \{1,\ldots,M^{(n)}_k\} \times \mc X^n \times \mc Y_0^n
\end{equation*}
designating the acceptance region for $H_0$. The encoders $\{\breve{\phi}^{(n)}_k\}_{k=1}^K$ and the detector $\breve{\psi}^{(n)}$ are such that the Type I error probability does not exceed a prescribed level $\epsilon \in [0,1]$, i.e.,
\begin{equation}
P_{\breve{\phi}^{(n)}_1(Y^n_1),\hdots,\breve{\phi}^{(n)}_K(Y^n_K),X^n,Y^n_0} (\mc A^c_n) \leq \epsilon
\end{equation}
and the Type II error probability does not exceed $\beta$, i.e.,
\begin{equation}
Q_{\breve{\phi}^{(n)}_1(Y^n_1),\hdots,\breve{\phi}^{(n)}_K(Y^n_K),X^n,Y^n_0} (\mc A_n) \leq \beta.
\end{equation}
\begin{definition}
A rate-exponent tuple $(R_1,\hdots,R_K,E)$ is achievable for a fixed $\epsilon \in [0,1]$ if for any positive $\delta$ and sufficiently large $n$ there exist encoders $\{\breve{\phi}^{(n)}_k\}_{k=1}^K$ and a detector $\breve{\psi}^{(n)}$ such that
\begin{subequations}
\begin{align}
\frac{1}{n} \log M^{(n)}_k &\leq R_k + \delta \:\: \text{for all}\:\: k \in \mc K, \:\: \text{and} \\
 -\frac{1}{n} \log \beta &\geq E - \delta. 
\end{align}
\end{subequations}
The rate-exponent region $\mc R_{\text{HT}}$ is defined as
\begin{equation}
\mc R_{\text{HT}} := \bigcap_{\epsilon > 0} \mc R_{\text{HT},\epsilon},
\label{definition-rate-exponent-region}
\end{equation}
where $\mc R_{\text{HT},\epsilon}$ is the set of all achievable rate-exponent vectors for a fixed $\epsilon \in [0,1]$. 
\qed
\end{definition}

\section{Discrete Memoryless Case}~\label{secIII}

 We start with an entropy characterization of the rate-exponent region $\mc R_{\text{HT}}$ as defined by~\eqref{definition-rate-exponent-region}. Let
\begin{equation}
\mc R^{\star} = \bigcup_{n} \bigcup_{\{\breve{\phi}^{(n)}_k\}_{k \in \mc K}}  \mc R^{\star}\Big(n,\{\breve{\phi}^{(n)}_k\}_{k \in \mc K}\Big)
\end{equation}
where
\begin{subequations}
\begin{align}
& \mc R^{\star}\Big(n,  \{\breve{\phi}^{(n)}_k\}_{k \in \mc K}\Big) =  \Big\{(R_1,\hdots,R_K,E) \:\: \text{s.t.} \nonumber\\
&\hspace{1.5cm} R_k \geq \frac{1}{n} \log|\breve{\phi}^{(n)}_k(Y^n_k)| \:\: \text{for all}\:\: k \in \mc K, \:\: \text{and} \\
&\hspace{1.5cm} E \leq \frac{1}{n} I(\{\breve{\phi}^{(n)}_k(Y^n_k)\}_{k \in \mc K};X^n|Y^n_0) \Big\}.
\end{align}
\label{entropy-characterization-rate-exponent-region-distributed-ht-against-conditional-independence}
\end{subequations}

\noindent We have the following proposition the proof of which is essentially similar to that of \cite[Theorem 5]{AC86}; and, hence, is omitted.

\begin{proposition}~\label{proposition-entropy-characterization-rate-exponent-region-distributed-ht-against-conditional-independence}
$\mc R_{\text{HT}} = \xbar{\mc R^{\star}}$.
\end{proposition}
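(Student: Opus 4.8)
The plan is to follow the two-step template of Ahlswede--Csisz\'ar, reducing the detector's task to a standard (conditional) Stein's lemma. The starting observation is that, once the encoders $\{\breve{\phi}^{(n)}_k\}_{k\in\mc K}$ are fixed, the detector faces a binary test between the two induced joint laws of the tuple $(M_{\mc K},X^n,Y^n_0)$, where $M_{\mc K}=\{\breve{\phi}^{(n)}_k(Y^n_k)\}_{k\in\mc K}$. Under $H_0$ this law is the true joint $P_{M_{\mc K},X^n,Y^n_0}$. Under $H_1$, since $X$ and $Y_{\mc K}$ are conditionally independent given $Y_0$ and the messages are deterministic functions of $Y^n_{\mc K}$, the law factorizes as $P_{M_{\mc K}|Y^n_0}\,P_{X^n|Y^n_0}\,P_{Y^n_0}$; here I use that the marginals of $(Y_0,Y_{\mc K})$ and of $(Y_0,X)$ coincide under $H_0$ and $H_1$, so all conditionals may be written with respect to $P$. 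The relevant divergence is then $D\big(P_{M_{\mc K},X^n,Y^n_0}\,\big\|\,P_{M_{\mc K}|Y^n_0}P_{X^n|Y^n_0}P_{Y^n_0}\big)=I(M_{\mc K};X^n|Y^n_0)$, exactly the quantity appearing in $\mc R^{\star}$.

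For the converse ($\mc R_{\text{HT}}\subseteq\xbar{\mc R^{\star}}$) I would apply the data-processing inequality for relative entropy to the binary partition $\{\mc A_n,\mc A^c_n\}$ induced by the acceptance region. With $P(\mc A_n)\ge 1-\epsilon$ and $Q(\mc A_n)=\beta$, this yields, up to a crude additive constant, $(1-\epsilon)\log(1/\beta)\le I(M_{\mc K};X^n|Y^n_0)+1$, hence $-\tfrac1n\log\beta\le \tfrac{1}{1-\epsilon}\big(\tfrac1n I(M_{\mc K};X^n|Y^n_0)+\tfrac1n\big)$. Thus any scheme's own encoders at blocklength $n$ place its actual rate/exponent pair into $\mc R^{\star}(n,\{\breve{\phi}^{(n)}_k\})\subseteq\mc R^{\star}$. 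Since $\mc R_{\text{HT}}=\bigcap_{\epsilon>0}\mc R_{\text{HT},\epsilon}$, letting $\epsilon\to0$ removes the factor $1/(1-\epsilon)$ and the vanishing additive term, so the target tuple lies in $\xbar{\mc R^{\star}}$; the rate constraints are immediate from the definition of $M^{(n)}_k$.

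For achievability ($\xbar{\mc R^{\star}}\subseteq\mc R_{\text{HT}}$), the subtlety is that a point of $\mc R^{\star}$ is pinned to a \emph{fixed} blocklength $n$ and fixed encoders, whereas achievability requires the blocklength to grow. I would therefore use a super-block construction: over a horizon of length $nN$, split each $Y^{nN}_k$ into $N$ consecutive length-$n$ blocks and apply $\breve{\phi}^{(n)}_k$ to each, so that the per-symbol rate is still $\tfrac1n\log M^{(n)}_k$. The resulting super-source consists of $N$ i.i.d.\ copies of the super-symbol $(M_{\mc K},X^n,Y^n_0)$, distributed as $P$ under $H_0$ and as the factorized law under $H_1$. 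Applying the ordinary i.i.d.\ Stein's lemma to this super-source gives, for every $\epsilon\in(0,1)$, a detector whose Type II exponent per super-symbol approaches $D(P\|Q)=I(M_{\mc K};X^n|Y^n_0)$, i.e.\ an exponent tending to $\tfrac1n I(M_{\mc K};X^n|Y^n_0)$ per original symbol. Taking the union over $n$ and over encoders, and then the closure, matches $\mc R^{\star}$.

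The step I expect to be the main obstacle is this achievability reduction: because each message $M_k$ is a single index rather than an i.i.d.\ string, one cannot invoke typicality or Stein's lemma at the level of a single length-$n$ block. The super-block device is precisely what manufactures genuine i.i.d.\ repetitions of $(M_{\mc K},X^n,Y^n_0)$, and it must be arranged so that the rate normalization and the exponent normalization are simultaneously correct. Verifying the conditional-independence factorization of the $H_1$ law at the super-symbol level---which hinges on the product structure of $H_1$ in~\eqref{distribution-under-alternate-hypothesis-ht-against-conditional-independence-DM} together with the matching marginals of $(Y_0,Y_{\mc K})$ and $(Y_0,X)$---is the other point requiring care.
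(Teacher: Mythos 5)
Your proposal is correct and follows essentially the same route as the paper, which explicitly defers to the proof of Theorem 5 of Ahlswede--Csisz\'ar: the converse via the data-processing (log-sum) inequality for relative entropy applied to the acceptance region, using the matching marginals of $(Y_0,Y_{\mc K})$ and $(Y_0,X)$ to identify $D(P\|Q)$ with $I(M_{\mc K};X^n|Y_0^n)$, and achievability via Stein's lemma applied to the i.i.d.\ super-source of length-$n$ blocks. The super-block device you flag as the key step is precisely the standard mechanism in that proof, so no further changes are needed.
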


\noindent We now have the following theorem which provides a single-letter characterization of the rate-exponent region $\mc R_{\text{HT}}$.

\begin{theorem}~\label{theorem-rate-exponent-region-hypothesis-testing-DM-case}
The rate-exponent region $\mc R_{\text{HT}}$ is given by the union of all  non-negative tuples $(R_1,\ldots,R_K,E)$ that satisfy, for all subsets $\mc S \subseteq \mc K$, 
\begin{equation*}
E \leq I(U_{\mc S^c};X|Y_0,Q) + \sum_{k \in \mc S} \big(R_k - I(Y_k;U_k|X,Y_0,Q)\big)
\end{equation*}
for some auxiliary random variables $(U_1,\ldots,U_K,Q)$ with distribution $P_{U_{\mc K},Q}(u_{\mc K},q)$ such that 
\begin{align}
& P_{X, Y_0, Y_{\mc K}, U_{\mc K}, Q}(x, y_0, y_{\mc K},u_{\mc K},q) = P_Q(q)  P_{X, Y_0}( x, y_0) \nonumber\\
& \hspace{0.1cm} = \prod_{k=1}^K P_{Y_k|X, Y_0}(y_k|x,y_0) \: \prod_{k=1}^{K} P_{U_k|Y_k,Q}(u_k|y_k,q).
\end{align}
\end{theorem}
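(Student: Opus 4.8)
The plan is to establish the two inclusions separately. Achievability is inherited from the Quantize-Bin-Test scheme of \cite[Theorem 1]{RW12}: specializing that inner bound to the present setup, in which $\{Y_k\}_{k\in\mc K}$ are conditionally independent given $(X,Y_0)$ under $H_0$ by the Markov chain~\eqref{eq:MKChain_pmf}, one checks that the achievable tuples are exactly those in the stated region, generated by test channels $P_{U_k|Y_k,Q}$ and a time-sharing variable $Q$ with the claimed factorization. I therefore concentrate on the converse, which is the substantive direction, and which I obtain by single-letterizing the entropy characterization $\mc R_{\text{HT}}=\xbar{\mc R^{\star}}$ of Proposition~\ref{proposition-entropy-characterization-rate-exponent-region-distributed-ht-against-conditional-independence}, following the CEO-under-logarithmic-loss converse of Courtade and Weissman~\cite[Theorem 10]{CW14}.

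Fix $n$ and the encoders, and write $M_k:=\breve{\phi}^{(n)}_k(Y^n_k)$. The entropy characterization gives $nE\le I(M_{\mc K};X^n|Y^n_0)$ and $nR_k\ge H(M_k)$. The structural fact I would exploit, which comes precisely from~\eqref{eq:MKChain_pmf} together with memorylessness, is that the messages $\{M_k\}_{k\in\mc K}$ are mutually conditionally independent given $(X^n,Y^n_0)$; hence $H(M_{\mc S}|X^n,Y^n_0)=\sum_{k\in\mc S}H(M_k|X^n,Y^n_0)$ for every $\mc S\subseteq\mc K$. Using this, for a fixed $\mc S$ I would split $I(M_{\mc K};X^n|Y^n_0)=I(M_{\mc S^c};X^n|Y^n_0)+I(M_{\mc S};X^n|M_{\mc S^c},Y^n_0)$ and bound the second term by $\sum_{k\in\mc S}H(M_k)-\sum_{k\in\mc S}H(M_k|X^n,Y^n_0)$, which yields the multi-letter inequality $nE\le I(M_{\mc S^c};X^n|Y^n_0)+\sum_{k\in\mc S}\big(nR_k-H(M_k|X^n,Y^n_0)\big)$.

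It then remains to single-letterize. I would let $J$ be uniform on $\{1,\dots,n\}$ and independent of everything, set $Q:=(X^{J-1},X_{J+1}^n,Y_0^{J-1},Y_{0,J+1}^n,J)$ (the full decoder side information at all times but $J$), and define $U_k:=(M_k,Y_k^{J-1})$. One verifies directly that $(U_{\mc K},Q)$ obey the required factorization: $Q$ is independent of the coordinate-$J$ variables by memorylessness, the $Y_k$ stay conditionally independent given $(X,Y_0)$, and $U_k-(Y_k,Q)-(X,Y_0,Y_{\mc K\setminus k},U_{\mc K\setminus k})$ holds because, conditioned on $(Y_{k,J},Q)$, the variable $U_k$ is a function of the channel-$k$ observations at the time indices other than $J$, and these are independent both of the coordinate-$J$ variables and of the other channels' observations. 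With these choices the rate term single-letterizes with equality, $H(M_k|X^n,Y^n_0)=n\,I(Y_k;U_k|X,Y_0,Q)$, the steps being $I(M_k;Y^n_k|X^n,Y^n_0)=\sum_i I(M_k;Y_{k,i}|X^n,Y^n_0,Y_k^{i-1})$ together with the vanishing of $I(Y_k^{i-1};Y_{k,i}|X^n,Y^n_0)$, again by conditional independence.

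The delicate step, and the one I expect to be the main obstacle, is the matching single-letter \emph{upper} bound on the exponent term, $I(M_{\mc S^c};X^n|Y^n_0)\le n\,I(U_{\mc S^c};X|Y_0,Q)$. Writing $n\,I(U_{\mc S^c};X|Y_0,Q)=\sum_i I(M_{\mc S^c},Y_{\mc S^c}^{i-1};X_i|X^{i-1},X_{i+1}^n,Y^n_0)$ and comparing with $I(M_{\mc S^c};X^n|Y^n_0)=\sum_i I(M_{\mc S^c};X_i|X^{i-1},Y^n_0)$, the two differ only by the future block $X_{i+1}^n$ in the conditioning (the extra $Y_{\mc S^c}^{i-1}$ terms are nonnegative and harmless). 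The gap equals $\Delta:=\sum_i\big[I(M_{\mc S^c};X_{i+1}^n|X^i,Y^n_0)-I(M_{\mc S^c};X_{i+1}^n|X^{i-1},Y^n_0)\big]$, and a short co-information computation rewrites each summand as $I(X_i;X_{i+1}^n|X^{i-1},M_{\mc S^c},Y^n_0)-I(X_i;X_{i+1}^n|X^{i-1},Y^n_0)$; the second term vanishes because the pairs $(X_i,Y_{0,i})$ are i.i.d., so $X_i$ is independent of $X_{i+1}^n$ given $(X^{i-1},Y^n_0)$, while the first is nonnegative, whence $\Delta\ge 0$ and the bound follows. This is exactly where the two-sided side information placed in $Q$ and the i.i.d. structure are indispensable. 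Finally, a standard support-lemma (Carathéodory) argument bounds the alphabets of $U_1,\dots,U_K$ and $Q$; letting $\delta\to 0$ and passing to the closure then delivers the single-letter region, which combined with achievability proves the theorem.
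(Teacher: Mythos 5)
Your proposal is correct and follows essentially the same route as the paper's proof: the entropy characterization of Proposition~\ref{proposition-entropy-characterization-rate-exponent-region-distributed-ht-against-conditional-independence}, the identifications $U_k=(M_k,Y_k^{J-1})$ and $Q=(X^{\setminus J},Y_0^{\setminus J},J)$, the decomposition of $I(M_{\mc K};X^n|Y_0^n)$ that exploits the conditional independence of the messages given $(X^n,Y_0^n)$ under $H_0$, and achievability via the Quantize-Bin-Test scheme. Your co-information computation for the exponent term is just a term-by-term rearrangement of the paper's ``conditioning reduces entropy'' step applied to $H(X^n\,|\,M_{\mc S^c},Y_0^n)$ together with the i.i.d.\ structure, so the two arguments coincide in substance.
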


\begin{proof}
The proof of Theorem~\ref{theorem-rate-exponent-region-hypothesis-testing-DM-case} is given in Section~\ref{secV_subsecA}. 
\end{proof}

\begin{remark}\label{remark-optimality-QBT-DM-case}
As we mentioned in the introduction section, Rahman and Wagner~\cite{RW12} study the hypothesis testing problem of Figure~\ref{fig-distributed-hypothesis-testing} in the case in which $X$ is replaced by a two-source $(Y_{K+1},X)$ such that, like in our setup (which corresponds to $Y_{K+1}$ deterministic), $Y_0$ induces conditional independence between $(Y_1,\hdots,Y_K,Y_{K+1})$ and $X$ under the alternate hypothesis $H_1$. Under the null hypothesis $H_0$, however, the model studied by Rahman and Wagner in~\cite{RW12} assumes a more general distribution than ours in which $(Y_1,\hdots,Y_K,Y_{K+1})$ are arbitrarily correlated among them and with the pair $(X,Y_0)$. More precisely, the joint distributions of $(X,Y_1,\hdots,Y_K,Y_{K+1})$ under the null and alternate hypotheses as considered in~\cite{RW12} are
\begin{subequations}
\begin{align}
\label{distribution-under-null-hypothesis-ht-against-conditional-independence-DM-Rahman-Wagner}
H_0  &: \tilde{P}_{X,Y_0,Y_1\hdots,Y_K,Y_{K+1}} = P_{Y_0} P_{X,Y_1,\hdots,Y_K,Y_{K+1}|Y_0} \\
H_1  &: \tilde{Q}_{X,Y_0,Y_1\hdots,Y_K,Y_{K+1}} = P_{Y_0} P_{X|Y_0} P_{Y_1,\hdots,Y_K,Y_{K+1}|Y_0}. 
\label{distribution-under-alternate-hypothesis-ht-against-conditional-independence-DM-Rahman-Wagner}
\end{align}
\label{distributions-under-null-and-alternate-hypotheses-ht-against-conditional-independence-DM-Rahman-Wagner}
\end{subequations}
\noindent For this model, they provide inner and outer bounds on the rate-exponent region which do not mach in general (see~\cite[Theorem 1]{RW12} for the inner bound and~\cite[Theorem 2]{RW12} for the outer bound). The inner bound of~\cite[Theorem 1]{RW12} is based on a scheme, named Quantize-Bin-Test scheme therein, that is similar to the Berger-Tung distributed source coding scheme~\cite{B77,T78}; and whose achievable rate-exponent region can be shown through submodularity arguments to be equivalent to the region stated in Theorem~\ref{theorem-rate-exponent-region-hypothesis-testing-DM-case} (with $Y_{K+1}$ set to be deterministic). The result of Theorem~\ref{theorem-rate-exponent-region-hypothesis-testing-DM-case} then shows that if the joint distribution of the variables under the null hypothesis is restricted to satisfy~\eqref{distribution-under-null-hypothesis-ht-against-conditional-independence-DM}, i.e., the encoders' observations $\{Y_k\}_{k \in \mc K}$ are independent conditionally given $(X,Y_0)$, then the Quantize-Bin-Test scheme of~\cite[Theorem 1]{RW12} is optimal. We note that, prior to this work, for general distributions under the null hypothesis (i.e., without the Markov chain~\eqref{eq:MKChain_pmf} under this hypothesis) the optimality of the Quantize-Bin-Test scheme of~\cite{RW12} for the problem of testing against conditional independence was known only for the special case of a single encoder, i.e., $K=1$, (see~\cite[Theorem 3]{RW12}), a result which can also be recovered from Theorem~\ref{theorem-rate-exponent-region-hypothesis-testing-DM-case}. 
\end{remark}

\section{Memoryless Vector Gaussian Case}~\label{secIV}

We now turn to a continuous example of the hypothesis testing problem studied in this paper. Here,  $(\dv X,\dv Y_0,\dv Y_1,\hdots,\dv Y_K)$ is a zero-mean Gaussian random vector such that 
\begin{equation}~\label{mimo-gaussian-ht-model-2}
\dv Y_0 = \dv H_0 \dv X  + \dv N_0
\end{equation}
 where $\dv H_0 \in \mathds{C}^{n_0\times n_x}$, $\dv X \in \mathds{C}^{n_x}$ and $\dv N_0 \in \mathds{C}^{n_0}$ are independent Gaussian vectors with zero-mean and covariance matrices $\dv\Sigma_{\dv x} \succeq \dv 0$ and $\dv\Sigma_0 \succeq \dv 0$, respectively. The vectors $(\dv Y_1, \hdots,\dv Y_K)$ and $\dv X$ are correlated under the null hypothesis $H_0$ and are independent under the alternate hypothesis $H_1$, with
\begin{subequations}
\begin{align}
\label{distributions-under-null-hypothesis-ht-against-conditional-independence-vector-Gaussian}
H_0 \: &: \dv Y_k = \dv H_k \dv X + \dv N_k, \quad\text{for all}\:\: k \in \mc K\\
H_1 \: &:  (\dv Y_1,\hdots,\dv Y_K) \:\: \text{independent from}\:\: \dv X \:\: \text{conditionally given}\:\: \dv Y_0.
\label{distributions-under-alternate-hypothesis-ht-against-conditional-independence-vector-Gaussian}
\end{align}
\label{distributions-under-null-and-alternate-hypotheses-ht-against-conditional-independence-vector-Gaussian}
\end{subequations}
The noise vectors $(\dv N_1,\hdots,\dv N_K)$ are jointly Gaussian with zero mean and covariance matrix $\dv\Sigma_{\dv n_{\mc K}}  \succeq \dv 0$. They are assumed to be independent from $\dv X$ but correlated among them and with $\dv N_0$, with for every $\mc S \subseteq \mc K$, 
\begin{equation}  
\dv N_{\mc S} \mkv  \dv N_0  \mkv \dv N_{\mc S^c}. 
\label{markov-chain-assumption-gaussian-hypothesis-testing-model}
\end{equation}
Let $\dv\Sigma_k$ denote the covariance matrix of noise $\dv N_k$, $k \in \mc K$. Also, let $\mc R_{\text{VG-HT}}$ denote the rate-exponent region of this vector Gaussian hypothesis testing against conditional independence problem. 

\noindent For convenience, we now introduce the following notation which will be instrumental in what follows. Let, for every set $\mc S \subseteq \mc K$, the set $\bar{\mc S} = \{0\} \cup \mc S^c$. Also, for $\mc S \subseteq \mc K$ and given matrices $\{\dv\Omega_k\}_{k=1}^K$ such that $\dv 0 \preceq \dv\Omega_k \preceq \dv\Sigma_k^{-1}$, let $\boldsymbol{\Lambda}_{\bar{\mc S}}$ designate the block-diagonal matrix given by
\begin{align}~\label{equation-definition-T}
\boldsymbol{\Lambda}_{\bar{\mc S}} :=
\begin{bmatrix}
\dv 0 & \dv 0 \\
\dv 0 & \mathrm{diag}(\{ \dv\Sigma_k - \dv\Sigma_k \dv\Omega_k \dv\Sigma_k \}_{k\in\mc S^c})
\end{bmatrix} 
\end{align}
where $\dv 0$ in the principal diagonal elements is the $n_0{\times}n_0$-all zero matrix.

\noindent The following theorem gives an explicit characterization of $\mc R_{\text{VG-HT}}$.  

\begin{theorem}~\label{theorem-rate-exponent-region-gaussian-hypothesis-testing-against-conditional-independence}
The rate-exponent region $\mc R_{\text{VG-HT}}$ of the vector Gaussian hypothesis testing against conditional independence problem is given by the set of all non-negative tuples $(R_1,\ldots, R_K,E)$ that satisfy, for all subsets $\mc S \subseteq \mc K$,  
\begin{align*}
E &\leq \sum_{k \in \mc S} \big(R_k + \log \left| \dv I - \dv\Omega_k \dv\Sigma_k \right| \big) - \log \left|\dv I + \dv\Sigma_{\dv x}\dv H_0^{\dagger}\dv\Sigma_0^{-1}\dv H_0 \right| \nonumber\\
&\vspace{0.2cm} + \log \left| \dv I + \dv\Sigma_{\dv x} \dv H_{\bar{\mc S}}^\dagger \dv\Sigma_{\dv n_{\bar{\mc S}}}^{-1}\big( \dv I - \boldsymbol{\Lambda}_{\bar{\mc S}} \dv\Sigma_{\dv n_{\bar{\mc S}}}^{-1}\big)\dv H_{\bar{\mc S}}\right| 
\end{align*}
for matrices $\{\dv\Omega_k\}_{k=1}^K$ such that $\dv 0 \preceq \dv\Omega_k \preceq \dv\Sigma_k^{-1}$, where $\bar{\mc S}=\{0\} \cup \mc S^c$ and $\boldsymbol{\Lambda}_{\bar{\mc S}}$ is given by~\eqref{equation-definition-T}. \qedblack
\end{theorem}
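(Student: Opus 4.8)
The plan is to obtain both bounds by specializing the discrete-memoryless characterization of Theorem~\ref{theorem-rate-exponent-region-hypothesis-testing-DM-case} — which continues to govern the Gaussian source~\eqref{distributions-under-null-and-alternate-hypotheses-ht-against-conditional-independence-vector-Gaussian} by standard quantization arguments — and to prove that jointly Gaussian test channels together with time sharing are optimal. For achievability I would fix matrices $\dv 0\preceq\dv\Omega_k\preceq\dv\Sigma_k^{-1}$ and run the Quantize--Bin--Test scheme of~\cite[Theorem 1]{RW12} with the conditionally Gaussian choice $U_k=\dv Y_k+\dv W_k$, where the $\dv W_k$ are independent zero-mean Gaussian vectors, independent of $(\dv X,\dv Y_0,\dv N_{\mc K})$, whose covariances are selected so that $\mathrm{Cov}(\dv N_k\mid U_k,\dv X,\dv Y_0,Q)=\dv\Sigma_k-\dv\Sigma_k\dv\Omega_k\dv\Sigma_k$; such $\dv W_k$ exist precisely because $\dv 0\preceq\dv\Omega_k\preceq\dv\Sigma_k^{-1}$. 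Evaluating $I(\dv Y_k;U_k\mid\dv X,\dv Y_0,Q)$ and $I(U_{\mc S^c};\dv X\mid\dv Y_0,Q)$ for these choices is then a closed-form Gaussian log-determinant computation: the rate penalties give the $\log|\dv I-\dv\Omega_k\dv\Sigma_k|$ terms, and the exponent term, expanded through $\dv Y_0=\dv H_0\dv X+\dv N_0$, $\dv Y_k=\dv H_k\dv X+\dv N_k$ and the joint noise covariance $\dv\Sigma_{\dv n_{\bar{\mc S}}}$, produces the two remaining log-determinants, with $Q$ serving as the time-sharing variable.

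\textbf{Converse: setup and the rate terms.} For the converse I would take an arbitrary admissible $(U_{\mc K},Q)$ feasible in Theorem~\ref{theorem-rate-exponent-region-hypothesis-testing-DM-case} and bound, for each $\mc S\subseteq\mc K$, the quantity $I(U_{\mc S^c};\dv X\mid\dv Y_0,Q)+\sum_{k\in\mc S}\big(R_k-I(\dv Y_k;U_k\mid\dv X,\dv Y_0,Q)\big)$. The first move is to \emph{read off} the matrices from the given auxiliaries by setting $\dv\Sigma_k-\dv\Sigma_k\dv\Omega_k\dv\Sigma_k:=\mathrm{Cov}(\dv N_k\mid U_k,\dv X,\dv Y_0,Q)$, which automatically satisfies $\dv 0\preceq\dv\Omega_k\preceq\dv\Sigma_k^{-1}$ since the conditional covariance lies between $\dv 0$ and $\dv\Sigma_k$. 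The subtracted rate terms are then the easy direction: a maximum-entropy (Gaussian) upper bound on $h(\dv Y_k\mid U_k,\dv X,\dv Y_0,Q)$ lower-bounds each $I(\dv Y_k;U_k\mid\dv X,\dv Y_0,Q)$ and yields the $\log|\dv I-\dv\Omega_k\dv\Sigma_k|$ terms; here the conditional independence of the noises given $(\dv X,\dv Y_0)$ guaranteed by~\eqref{markov-chain-assumption-gaussian-hypothesis-testing-model} keeps the bookkeeping clean.

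\textbf{Converse: the exponent term — the main obstacle.} The substance is to upper-bound $I(U_{\mc S^c};\dv X\mid\dv Y_0,Q)=h(\dv X\mid\dv Y_0,Q)-h(\dv X\mid\dv Y_0,U_{\mc S^c},Q)$, which needs a matching \emph{lower} bound on $h(\dv X\mid\dv Y_0,U_{\mc S^c},Q)$; maximum entropy bounds in the wrong direction and is useless here. This is where I expect the difficulty to concentrate, and I would follow the Fisher-information route in the spirit of the vector Gaussian CEO converse. Using the de Bruijn identity and a conditional Fisher-information matrix inequality that exploits the Markov chains $U_k\mkv\dv Y_k\mkv(\dv X,\dv Y_0)$ and the linear-Gaussian model $\dv Y_{\bar{\mc S}}=\dv H_{\bar{\mc S}}\dv X+\dv N_{\bar{\mc S}}$, I would establish
\[
\dv J(\dv X\mid\dv Y_0,U_{\mc S^c},Q)\preceq\dv\Sigma_{\dv x}^{-1}+\dv H_{\bar{\mc S}}^{\dagger}\dv\Sigma_{\dv n_{\bar{\mc S}}}^{-1}\big(\dv I-\boldsymbol{\Lambda}_{\bar{\mc S}}\dv\Sigma_{\dv n_{\bar{\mc S}}}^{-1}\big)\dv H_{\bar{\mc S}},
\]
with $\boldsymbol{\Lambda}_{\bar{\mc S}}$ given by~\eqref{equation-definition-T} and carrying exactly the conditional covariances identified above in its $\mc S^c$ blocks. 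The delicate point is that the noise correlations force one to work with the \emph{joint} covariance $\dv\Sigma_{\dv n_{\bar{\mc S}}}$ of $(\dv N_0,\dv N_{\mc S^c})$ rather than with per-encoder conditionals, so that the mismatch between the marginal $\dv\Sigma_k$ appearing in the rate terms and the conditional covariances must cancel against the $\boldsymbol{\Lambda}_{\bar{\mc S}}$ block once the pieces are assembled; the noise Markov chain~\eqref{markov-chain-assumption-gaussian-hypothesis-testing-model} is precisely what drives this cancellation.

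\textbf{Assembling and removing $Q$.} Converting the Fisher-information inequality into an entropy bound through de Bruijn gives $h(\dv X\mid\dv Y_0,U_{\mc S^c},Q)$ bounded below by its Gaussian value, hence the $+\log|\dv I+\dv\Sigma_{\dv x}\dv H_{\bar{\mc S}}^{\dagger}\dv\Sigma_{\dv n_{\bar{\mc S}}}^{-1}(\dv I-\boldsymbol{\Lambda}_{\bar{\mc S}}\dv\Sigma_{\dv n_{\bar{\mc S}}}^{-1})\dv H_{\bar{\mc S}}|$ term, while $h(\dv X\mid\dv Y_0,Q)=h(\dv X\mid\dv Y_0)$ supplies the $-\log|\dv I+\dv\Sigma_{\dv x}\dv H_0^{\dagger}\dv\Sigma_0^{-1}\dv H_0|$ term. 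Finally I would discharge the time-sharing variable: each summand is concave in the matrices $\{\dv\Omega_k\}$ through the log-determinants, so averaging over $Q$ and applying Jensen's inequality shows that a single choice of $\{\dv\Omega_k\}$ suffices, reproducing the region in the statement for every $\mc S\subseteq\mc K$ and thereby matching the achievability.
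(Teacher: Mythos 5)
Your proposal is correct and follows essentially the same route as the paper: specialize the DM region of Theorem~\ref{theorem-rate-exponent-region-hypothesis-testing-DM-case} to the Gaussian source, read off $\dv\Omega_{k,q}$ from the conditional covariance $\mathrm{mmse}(\dv Y_k|\dv X,U_{k,q},\dv Y_0,q)$, lower-bound the rate terms by the max-entropy/MMSE side of Lemma~\ref{lemma-fisher}, bound the exponent term via the Fisher-information side together with the de Bruijn identity (where the vanishing of the cross-terms of $\mathrm{mmse}(\dv Y_{\mc S^c}|\dv X,U_{\mc S^c,q},\dv Y_0,q)$ under the noise Markov chain~\eqref{markov-chain-assumption-gaussian-hypothesis-testing-model} yields exactly the $\boldsymbol{\Lambda}_{\bar{\mc S}}$ block), and then remove $Q$ by Jensen's inequality applied to the concave log-det expressions, with achievability by Gaussian test channels in the Quantize-Bin-Test scheme.
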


\begin{proof}
The proof of Theorem~\ref{theorem-rate-exponent-region-gaussian-hypothesis-testing-against-conditional-independence} is given in Section~\ref{secV_subsecB}. 
\end{proof}

In what follows, we elaborate on two special cases of Theorem~\ref{theorem-rate-exponent-region-gaussian-hypothesis-testing-against-conditional-independence}, i) the one-encoder vector Gaussian testing against conditional independence problem (i.e., $K=1$) and ii) the $K$-encoder scalar Gaussian testing against independence problem.

i) Let us first consider the case $K=1$. In this case, the Markov chain~\eqref{markov-chain-assumption-gaussian-hypothesis-testing-model} which is to be satisfied under the null hypothesis is non-restrictive; and Theorem~\ref{theorem-rate-exponent-region-gaussian-hypothesis-testing-against-conditional-independence} then provides a complete solution of the (general) one-encoder vector Gaussian testing against conditional independence problem. More precisely, in this case the optimal trade-off between rate and Type II error exponent is given by the set of pairs $(R_1,E)$ that satisfy
\begin{subequations}
\begin{align}
E & \leq R_1 + \log \left|\dv I - \dv\Omega_1 \dv\Sigma_1 \right| \\
E &\leq \log \left| \dv I + \dv\Sigma_{\dv x} \dv H_{\{0,1\}}^\dagger \dv\Sigma_{\dv n_{\{0,1\}}}^{-1}\big( \dv I - \boldsymbol{\Lambda}_{\{0,1\}} \dv\Sigma_{\dv n_{\{0,1\}}}^{-1}\big)\dv H_{\{0,1\}} \right| \nonumber\\
& \qquad - \log \left|\dv I + \dv\Sigma_{\dv x}\dv H_0^{\dagger}\dv\Sigma_0^{-1}\dv H_0 \right|,
\end{align}
\label{rate-exponent-region-one-encoder-testing-against-conditional-independence-vector-Gaussian-case}
\end{subequations}
for some $n_1{\times}n_1$ matrix $\dv\Omega_1$ such that $\dv 0 \preceq \dv\Omega_1 \preceq \dv\Sigma_1^{-1}$, where $\dv H_{\{0,1\}}=[\dv H^{\dag}_0, \dv H^{\dag}_1]^{\dag}$, $\dv\Sigma_{\dv n_{\{0,1\}}}$ is the covariance matrix of noise $(\dv N_0, \dv N_1)$ and 
\begin{align}
\boldsymbol{\Lambda}_{\{0,1\}} :=
\begin{bmatrix}
\dv 0 & \dv 0 \\
\dv 0 & \dv\Sigma_1 - \dv\Sigma_1 \dv\Omega_1 \dv\Sigma_1
\end{bmatrix}
\end{align}
with the $\dv 0$ in its principal diagonal denoting the $n_0{\times}n_0$-all zero matrix. In particular, for the setting of testing against independence, i.e., $\dv Y_0=\emptyset$ and the decoder's task reduced to guessing whether $\dv Y_1$ and $\dv X$ are independent or not, the optimal trade-off expressed by~\eqref{rate-exponent-region-one-encoder-testing-against-conditional-independence-vector-Gaussian-case} reduces to the set of $(R_1,E)$ pairs that satisfy, for some $n_1{\times}n_1$ matrix $\dv\Omega_1$ such that $\dv 0 \preceq \dv\Omega_1 \preceq \dv\Sigma_1^{-1}$,
\begin{equation}
E \leq \min \left\{R_1 + \log \left|\dv I - \dv\Omega_1 \dv\Sigma_1 \right|, \:\: \log \left| \dv I + \dv\Sigma_{\dv x}  \dv H_1^\dagger \dv\Omega_1 \dv H_1 \right| \right\}.
\label{optimal-rate-exponent-region-Gaussian-one-encoder-testing-against-independence}
\end{equation}
Observe that~\eqref{rate-exponent-region-one-encoder-testing-against-conditional-independence-vector-Gaussian-case} is the counter-part, to the vector Gaussian setting, of the result of~\cite[Theorem 3]{RW12} which provides a single-letter formula for the Type II error exponent for the one-encoder DM testing against conditional independence problem. Similarly,~\eqref{optimal-rate-exponent-region-Gaussian-one-encoder-testing-against-independence} is the solution of the vector Gaussian version of the one-encoder DM testing against independence problem which is studied, and solved, by Ahlswede and Csiszar in~\cite[Theorem 2]{AC86}. Also, we mention that, perhaps non-intuitive, in the one-encoder vector Gaussian testing against independence problem swapping the roles of $\dv Y_1$ and $\dv X$ (i.e., giving $\dv X$ to the encoder and the noisy (under the null hypothesis) $\dv Y_1$ to the decoder) does not result in an increase of the Type II error exponent which is then identical to~\eqref{optimal-rate-exponent-region-Gaussian-one-encoder-testing-against-independence}. Note that this is in sharp contrast with the related\footnote{The connection, which is sometimes misleading, consists in viewing the decoder in the hypothesis testing against independence problem considered here as one that computes a binary-valued function of $(\dv X,\dv Y_1)$.} setting of standard lossy source reproduction, i.e., the decoder aiming to reproduce the source observed at the encoder to within some average squared error distortion level using the sent compression message and its own side information, for which it is easy to see that, for given $R_1$ bits per sample, smaller distortion levels are allowed by having the encoder observe $\dv X$ and the decoder observe $\dv Y_1$, instead of the encoder observing the noisy $\dv Y_1=\dv H_1 \dv X + \dv N_1$ and the decoder observing $\dv X$. 

\vspace{0.2cm}

ii) Consider now the special case of the setup of Theorem~\ref{theorem-rate-exponent-region-gaussian-hypothesis-testing-against-conditional-independence} in which $K \geq 2$, $Y_0 = \emptyset$, and the sources and noises are all scalar complex-valued, i.e., $n_x=1$ and $n_k=1$ for all $k \in \mc K$. The vector $(Y_1,\hdots,Y_K)$ and $X$ are correlated under the null hypothesis $H_0$ and independent under the alternate hypothesis $H_1$, with
 \begin{subequations}
\begin{align}
\label{distributions-under-null-hypothesis-ht-against-independence-scalar-Gaussian}
H_0 \: &: Y_k = X + N_k, \quad\text{for all}\:\: k \in \mc K\\
H_1 \: &: (Y_1,\hdots,Y_K) \:\: \text{independent from}\:\: X.
\label{distributions-under-alternate-hypothesis-ht-against-independence-scalar-Gaussian}
\end{align}
\label{distributions-under-null-and-alternate-hypotheses-ht-against-independence-scalar-Gaussian}
\end{subequations}

\noindent The noises $N_1,\hdots,N_K$ are zero-mean jointly Gaussian, mutually independent and independent from $X$. Also, we assume that the variances $\sigma^2_k$ of noise $N_k$, $k \in \mc K$, and $\sigma^2_X$ of $X$ are all positive. In this case, it can be easily shown that Theorem~\ref{theorem-rate-exponent-region-gaussian-hypothesis-testing-against-conditional-independence} reduces to
\begin{align}
 & \mc R_{\text{SG-HT}} = \Big\{  (R_1,\hdots,R_K,E)\::\: \exists \: (\gamma_1,\hdots,\gamma_K) \in \mathbb{R}^K_{+} \:\: \text{such that} \nonumber\\
&\gamma_k \leq \frac{1}{\sigma^2_k},\:\forall k\in \mc K,\:\: \text{and} \:\: \forall \: \mc S \subseteq \mc K \nonumber\\
&\sum_{k \in \mc S} R_k \geq E + \log\Big[\Big(\Big(1+\sigma^2_X\sum_{k \in \mc S^c}\gamma_k\Big)\prod_{k\in \mc S}(1-\gamma_k \sigma^2_k)\Big)^{-1}\Big]\Big\}.
\label{rate-exponent-region-ht-against-independence-scalar-Gaussian}
\end{align}

\noindent The region $\mc R_{\text{SG-HT}}$ as given by~\eqref{rate-exponent-region-ht-against-independence-scalar-Gaussian} can be used to, e.g., characterize the centralized rate region, i.e., the set of rate vectors $(R_1,\hdots,R_K)$ that achieve the centralized Type II error exponent 
\begin{equation}
I(Y_1,\hdots,Y_K;X) = \sum_{k=1}^K \log \frac{\sigma^2_X}{\sigma^2_k}.
\end{equation}

 We close this section by mentioning that, implicit in Theorem~\ref{theorem-rate-exponent-region-gaussian-hypothesis-testing-against-conditional-independence}, the Quantize-Bin-Test scheme of~\cite[Theorem 1]{RW12} with Gaussian test channels and time-sharing is optimal for the vector Gaussian $K$-encoder hypothesis testing against conditional independence problem~\eqref{distributions-under-null-and-alternate-hypotheses-ht-against-conditional-independence-vector-Gaussian}. Furthermore, we note that Rahman and Wagner also characterized the optimal rate-exponent region of a different\footnote{This problem is related to the Gaussian many-help-one problem~\cite{O05,PTR04,WTV08}. Here, different from the setup of Figure~\ref{fig-distributed-hypothesis-testing}, the source $X$ is observed directly by a \textit{main encoder} who communicates with a detector that observes $Y$ in the aim of making a decision on whether $X$ and $Y$ are independent or not. Also, there are helpers that observe independent noisy versions of $X$ and communicate with the detector in the aim of facilitating that test.} Gaussian hypothesis testing against independence problem, called the Gaussian many-help-one hypothesis testing against independence problem therein, in the case of scalar valued sources~\cite[Theorem 7]{RW12}. Specialized to the case $K=1$, the result of Theorem~\ref{theorem-rate-exponent-region-gaussian-hypothesis-testing-against-conditional-independence} recovers that of~\cite[Theorem 7]{RW12} in the case of no helpers; and extends it to vector-valued sources and testing against conditional independence in that case.

\vspace{-0.4cm}

\section{Proofs}~\label{secV}

\vspace{-0.4cm}

\subsection{Proof of Theorem~\ref{theorem-rate-exponent-region-hypothesis-testing-DM-case}}~\label{secV_subsecA}

\subsubsection{Convese part}

Let a non-negative tuple $(R_1,\hdots,R_K,E) \in \mc R_{\text{HT}}$ be given. Since $\mc R_{\text{HT}} = \xbar{\mc R^{\star}}$, then there must exist a series of non-negative tuples $\{(R^{(m)}_1,\hdots,R^{(m)}_K,E^{(m)})\}_{m \in \mathbb{N}}$ such that
\begin{subequations}
\begin{align}
& (R^{(m)}_1,\hdots,R^{(m)}_K,E^{(m)}) \in \mc R^{\star} \:\:\: \text{for all}\:\: m \in \mathbb{N}, \quad \text{and} \\
& \lim_{m \to \infty} (R^{(m)}_1,\hdots,R^{(m)}_K,E^{(m)}) =  (R_1,\hdots,R_K,E).
\end{align}
\label{equivalence-HT-CEO-proof-direct-part-step1}
\end{subequations}
Fix $\delta' > 0$. Then, $\exists \:\: m_0 \in \mathbb{N}$ such that for all $m \geq m_0$, we have
\begin{subequations}
 \begin{align}
R_k &\geq R^{(m)}_k - \delta'  \:\:\: \text{for all}\:\: k \in \mc K, \quad \text{and} \\
E &\leq E^{(m)} + \delta'.
\end{align}
\label{equivalence-HT-CEO-proof-direct-part-step2}
\end{subequations}
\noindent For $m \geq m_0$, there exist a series $\{n_m\}_{m \in \mathbb{N}}$ and functions  $\{\breve{\phi}^{(n_m)}_k\}_{k \in \mc K}$ such that 
\begin{subequations}
 \begin{align}
 R^{(m)}_k  &\geq \frac{1}{n_m} \log|\breve{\phi}^{(n_m)}_k|   \:\: \text{for all}\:\: k \in \mc K, \:\: \text{and} \\
E^{(m)} &\leq \frac{1}{n_m} I(\{\breve{\phi}^{(n_m)}_k(Y^{n_m}_k)\}_{k \in \mc K};X^{n_m}|Y^{n_m}_0).
\end{align}
\label{equivalence-HT-CEO-proof-direct-part-step3}
\end{subequations}
\noindent Combining~\eqref{equivalence-HT-CEO-proof-direct-part-step2} and~\eqref{equivalence-HT-CEO-proof-direct-part-step3} we get that for all $m \geq m_0$,
\begin{subequations}
 \begin{align}
 R_k  &\geq \frac{1}{n_m} \log|\breve{\phi}^{(n_m)}_k(Y^{n_m}_k)| - \delta'   \:\: \text{for all}\:\: k \in \mc K, \:\: \text{and} \\
E  &\leq \frac{1}{n_m} I(\{\breve{\phi}^{(n_m)}_k(Y^{n_m}_k)\}_{k \in \mc K};X^{n_m}|Y^{n_m}_0) + \delta'.
\end{align}
\label{equivalence-HT-CEO-proof-direct-part-step4}
\end{subequations}
\noindent The second inequality of~\eqref{equivalence-HT-CEO-proof-direct-part-step4} implies that
\begin{equation}
H(X^{n_m} | \{\breve{\phi}^{(n_m)}_k(Y^{n_m}_k)\}_{k \in \mc K}, Y^{n_m}_0) \leq n_m (H(X|Y_0)-E) + n_m \delta'.
\label{equivalence-HT-CEO-proof-direct-part-step5}
\end{equation}

\noindent Let $\mc S \subseteq \mc K$ a given subset of $\mc K$ and $J_k := \breve{\phi}_k^{(n_m)}(Y_k^{n_m})$. Also, define, for $i=1,\ldots,n_m$, the following auxiliary random variables
\begin{equation}	
	U_{k,i} := (J_k, Y_k^{i-1}), \quad Q_i := (X^{i-1}, X_{i+1}^{n_m}, Y_0^{i-1}, Y_{0,i+1}^{n_m}).
	\label{proof-converse-definition-of-auxiliary-random-variables}
\end{equation}
Note that, for all $k \in \mc K$, it holds that $U_{k,i} \mkv Y_{k,i} \mkv (X_i, Y_{0,i}) \mkv Y_{\mc K \setminus k,i} \mkv U_{\mc K \setminus k,i}$ is a Markov chain in this order.

\noindent We have 
\begin{align}
n_m & \sum_{k\in \mc S} R_k \geq \sum_{k \in \mc S} H(J_k) \nonumber\\
& \geq H(J_\mc S) \nonumber\\
& \geq  H(J_\mc S|J_{\mc S^c},Y_0^{n_m}) \nonumber\\
& \geq I(J_{\mc S}; X^{n_m}, Y_\mc S^{n_m}|J_{\mc S^c},Y_0^{n_m}) \nonumber\\   
&= I(J_{\mc S}; X^{n_m}|J_{\mc S^c},Y_0^{n_m}) + I(J_{\mc S}; Y_\mc S^n|X^{n_m},J_{\mc S^c},Y_0^{n_m}) \nonumber\\ 
&= H(X^{n_m}|J_{\mc S^c},Y_0^{n_m}) - H(X^{n_m}|J_{\mc K},Y_0^{n_m}) \nonumber\\
& \qquad + I(J_{\mc S}; Y_\mc S^{n_m}|X^{n_m},J_{\mc S^c},Y_0^{n_m}) \nonumber\\  
&\stackrel{(a)}{\geq} H(X^{n_m}|J_{\mc S^c}, Y_0^{n_m}) - H(X^{n_m}|Y^{n_m}_0)  \nonumber\\
& \qquad + I(J_{\mc S}; Y_\mc S^{n_m}|X^{n_m},J_{\mc S^c},Y_0^{n_m}) + n_m E - n_m \delta' \nonumber\\  
&= \sum_{i=1}^{n_m} H(X_i|J_{\mc S^c},X^{i-1},Y_0^{n_m}) - H(X^{n_m}|Y^{n_m}_0) \nonumber\\
& \qquad + I(J_{\mc S}; Y_\mc S^{n_m}|X^{n_m},J_{\mc S^c},Y_0^{n_m}) + n_m E - n_m \delta' \nonumber\\  
&\stackrel{(b)}{\geq} \sum_{i=1}^{n_m} H(X_i|J_{\mc S^c},X^{i-1},X_{i+1}^{n_m},Y_{\mc S^c}^{i-1},Y_0^{n_m}) - H(X^{n_m}|Y^{n_m}_0) \nonumber\\
& \qquad + I(J_{\mc S}; Y_\mc S^{n_m}|X^{n_m},J_{\mc S^c},Y_0^{n_m}) + n_m E - n_m \delta' \nonumber\\
&\stackrel{(c)}{=} \sum_{i=1}^{n_m} H(X_i|U_{\mc S^c,i},Y_{0,i},Q_i) - H(X^{n_m}|Y^{n_m}_0) \nonumber\\
& \qquad + I(J_{\mc S}; Y_\mc S^{n_m}|X^{n_m},J_{\mc S^c},Y_0^{n_m}) + n_m E - n_m \delta' \nonumber\\
&\stackrel{(d)}{=} I(J_{\mc S}; Y_\mc S^{n_m}|X^{n_m},J_{\mc S^c},Y_0^{n_m}) - \sum_{i=1}^{n_m} I(U_{\mc S^c,i}, X_i|Y_{0,i},Q_i)  \nonumber\\
&\qquad + n_m E - n_m \delta' 
\label{lower-bounding-sum-rate-step1}
\end{align}
where $(a)$ follows by using~\eqref{equivalence-HT-CEO-proof-direct-part-step5}; $(b)$ holds since conditioning reduces entropy; and $(c)$ follows by substituting using~\eqref{proof-converse-definition-of-auxiliary-random-variables}; and $(d)$ holds since $(X^{n_m},Y^{n_m}_0)$ is memoryless and $Q_i$ is independent of $(X_i,Y_{0,i})$ for all $i=1,\hdots,n_m$. 

\noindent The term $I(J_{\mc S}; Y_\mc S^{n_m}|X^{n_m},J_{\mc S^c},Y_0^{n_m})$ on the RHS of~\eqref{lower-bounding-sum-rate-step1} can be lower bounded as
\begin{align}
 I(J_{\mc S}; & Y_{\mc S}^{n_m}|X^{n_m},J_{\mc S^c},Y_0^{n_m}) \stackrel{(a)}{\geq} \sum_{k \in \mc S} I(J_k;Y_k^{n_m}|X^{n_m},Y_0^{n_m}) \nonumber\\
& = \sum_{k \in \mc S} \sum_{i=1}^{n_m} I(J_k;Y_{k,i}|Y_k^{i-1},X^{n_m},Y_0^{n_m}) \nonumber\\
&\stackrel{(b)}{=} \sum_{k \in \mc S} \sum_{i=1}^{n_m} I(J_k,Y_k^{i-1};Y_{k,i}|X^{n_m},Y_0^{n_m}) \nonumber\\
&\stackrel{(c)}{=} \sum_{k \in \mc S} \sum_{i=1}^{n_m} I(U_{k,i};Y_{k,i}|X_i,Y_{0,i},Q_i)
 \label{lower-bounding-sum-rate-step2}
\end{align}
where $(a)$ follows due to the Markov chain $J_k \mkv Y_k^{n_m} \mkv (X^{n_m}, Y_0^{n_m}) \mkv Y_{\mc S \setminus k}^{n_m} \mkv J_{\mc S \setminus k}$ under the hypothesis $H_0$; $(b)$ follows due to the Markov chain $Y_{k,i} \mkv (X^{n_m},Y_0^{n_m}) \mkv Y_k^{i-1}$ under the hypothesis $H_0$; and $(c)$ follows by substituting using~\eqref{proof-converse-definition-of-auxiliary-random-variables}. 

\noindent Then, combining~\eqref{lower-bounding-sum-rate-step1} and~\eqref{lower-bounding-sum-rate-step2}, we get
\begin{align}
 n_m E & \leq \sum_{i=1}^{n_m} I(U_{\mc S^c,i}, X_i|Y_{0,i},Q_i ) + n_m \sum_{k\in \mc S} R_k  \nonumber\\
& \qquad - \sum_{k\in \mc S} \sum_{i=1}^{n_m} I(U_{k,i};Y_{k,i}|X_i,Y_{0,i},Q_i) + n_m \delta'.
\label{upper-bounding-exponent-final-step} 
\end{align}
\noindent Noticing that $\delta'$ in~\eqref{upper-bounding-exponent-final-step} can be chosen arbitrarily small, a standard time-sharing argument completes the proof of the converse part.

\subsubsection{Direct part}

The achievability follows by applying the Quantize-Bin-Test scheme of Rahman and Wagner~\cite[Appendix B]{RW12}. Applied to our model, the rate-exponent region achieved by this scheme, which we denote as $\mc R_{\text{QBT}}$,  is given by the union of all non-negative rate-exponent tuples $(R_1,\hdots,R_K,E)$ for which
\begin{subequations}
\begin{align}
\sum_{k \in \mc S} R_k &\geq I(U_{\mc S}; Y_{\mc S}|U_{\mc S^c}, Y_0, Q), \quad \forall \mc S \subseteq \mc K, \\
E &\leq I(U_{\mc K};X|Y_0).
\end{align}
\label{rate-exponent-region-quantize-bin-test-scheme}
\end{subequations}

\noindent Through submodularity arguments that are essentially similar to in~\cite[Appendix B]{CW14} (see also~\cite{E-AZCS17a} and~\cite[Appendix IV]{E-AZCS19a}), and which we omit here for brevity, the region $\mc R_{\text{QBT}}$ can be shown to be equivalent to the region $\mc R_{\text{HT}}$ as stated in Theorem~\ref{theorem-rate-exponent-region-hypothesis-testing-DM-case}.

\subsection{Proof of Theorem~\ref{theorem-rate-exponent-region-gaussian-hypothesis-testing-against-conditional-independence}}~\label{secV_subsecB}

\vspace{-0.2cm}

Let an achievable tuple $(R_1,\hdots,R_K,E)$ for the memroryless vector Gaussian hypothesis testing against conditional independence problem of Section~\ref{secIV} be given. By a standard extension of the result of Theorem~\ref{theorem-rate-exponent-region-hypothesis-testing-DM-case} to the continuous alphabet case (through standard discretezation arguments), there must exist a.r.v. $(U_1,\hdots,U_K,Q)$ with distribution that factorizes as 
 \begin{align}
& P_{\dv X, \dv Y_0, \dv Y_{\mc K}, U_{\mc K}, Q}(\dv x, \dv y_0, \dv y_{\mc K},u_{\mc K},q) = P_Q(q)  P_{\dv X, \dv Y_0}(\dv x, \dv y_0) \nonumber\\
& \hspace{0.1cm} = \prod_{k=1}^K P_{\dv Y_k|\dv X, \dv Y_0}(\dv y_k|\dv x,\dv y_0) \: \prod_{k=1}^{K} P_{U_k|\dv Y_k,Q}(u_k|\dv y_k,q).
\end{align}
such that for all $\mc S \subseteq \mc K$, 
\begin{equation}
E - \sum_{k \in \mc S} R_k \leq I(U_{\mc S^c};\dv X|\dv Y_0,Q) - \sum_{k \in \mc S} I(\dv Y_k;U_k|\dv X,\dv Y_0,Q).
\label{equivalent-representation-rate-exponent-region-DM-case}
\end{equation}
The converse proof of Theorem~\ref{theorem-rate-exponent-region-gaussian-hypothesis-testing-against-conditional-independence} relies on deriving an upper bound on the RHS of~\eqref{equivalent-representation-rate-exponent-region-DM-case}.  In doing so, we use the technique of~\cite[Theorem 8]{EU14} which relies on the de Bruijn identity and the properties of Fisher information; and extend the argument to account for the time-sharing variable $Q$ and side information $\dv Y_0$.

For convenience, we first state the following lemma.

\begin{lemma}{\cite{DCT91,EU14}}~\label{lemma-fisher}
Let $(\mathbf{X,Y})$  be a pair of random vectors with pmf $p(\mathbf{x},\mathbf{y})$. We have
\begin{equation*}
\log|(\pi e) \dv J^{-1}(\dv X|\dv Y)| \leq h(\dv X|\dv Y) \leq \log|(\pi e) \mathrm{mmse}(\dv X|\dv Y)|
\end{equation*}
where the conditional Fisher information matrix is defined as
\begin{equation*}
\dv J(\dv X|\dv Y) := \mathbb{E} [\nabla\log p(\dv X|\dv Y) \nabla\log p(\dv X|\dv Y)^\dagger]
\end{equation*}
and the minimum mean squared error (MMSE) matrix is 
\begin{equation*}
\mathrm{mmse}(\dv X|\dv Y) := \mathbb{E} [(\dv X-\mathbb{E}[\dv X|\dv Y])(\dv X-\mathbb{E} [\dv X|\dv Y])^\dagger]. \bqed 
\end{equation*}
\end{lemma}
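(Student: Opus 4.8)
The plan is to prove the two inequalities separately, since they have genuinely different characters: the right-hand (MMSE) bound is an elementary consequence of the maximum-entropy property of the Gaussian, whereas the left-hand (Fisher) bound is the substantive one.

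For the upper bound $h(\dv X|\dv Y) \le \log|(\pi e)\,\mathrm{mmse}(\dv X|\dv Y)|$, I would first exploit the translation invariance of conditional differential entropy. Writing the estimation error $\tilde{\dv X} := \dv X - \mathbb{E}[\dv X|\dv Y]$, which differs from $\dv X$ by a deterministic function of $\dv Y$, one has $h(\dv X|\dv Y) = h(\tilde{\dv X}|\dv Y)$. Dropping the conditioning can only increase entropy, so $h(\tilde{\dv X}|\dv Y) \le h(\tilde{\dv X})$; and since $\tilde{\dv X}$ is zero-mean with covariance exactly $\mathrm{mmse}(\dv X|\dv Y)$, the Gaussian maximum-entropy inequality $h(\dv Z)\le \log|(\pi e)\,\mathbb{E}[\dv Z\dv Z^{\dagger}]|$ applied to $\dv Z = \tilde{\dv X}$ closes this direction. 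Each of these three steps is standard and needs no delicate argument.

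For the lower bound $\log|(\pi e)\,\dv J^{-1}(\dv X|\dv Y)| \le h(\dv X|\dv Y)$, I would reduce the conditional statement to an unconditional one and then invoke the matrix Fisher-information bound. Writing $h(\dv X|\dv Y) = \mathbb{E}_{\dv Y}[h(\dv X|\dv Y=\dv y)]$ and applying, for each fixed $\dv y$, the unconditional bound $h(\dv X|\dv Y=\dv y)\ge \log|(\pi e)\,\dv J^{-1}(\dv X|\dv Y=\dv y)|$, it remains to pass the expectation over $\dv Y$ through. Because $\log|\cdot|$ is concave on positive-definite matrices, Jensen's inequality gives $\mathbb{E}_{\dv Y}[\log|\dv J(\dv X|\dv Y=\dv y)|] \le \log|\mathbb{E}_{\dv Y}[\dv J(\dv X|\dv Y=\dv y)]| = \log|\dv J(\dv X|\dv Y)|$, the last equality being exactly the definition of the conditional Fisher information as the average over $\dv Y$ of the per-$\dv y$ scores. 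Combining these and using $\log|(\pi e)\,\dv J^{-1}| = \log(\pi e)^{n} - \log|\dv J|$ yields the claimed conditional lower bound.

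The crux is therefore the unconditional inequality $h(\dv X)\ge \log|(\pi e)\,\dv J^{-1}(\dv X)|$, which is the Dembo--Cover--Thomas bound and is where the real work lies. I would prove it by first noting that the gap $h(\dv X) - \log|(\pi e)\,\dv J^{-1}(\dv X)|$ is invariant under invertible linear maps $\dv X \mapsto \dv A\dv X$: both $h$ and $\log|\dv J^{-1}|$ pick up the same determinant factor $\log|\det\dv A|^{2}$, since $h(\dv A\dv X) = h(\dv X) + \log|\det\dv A|^{2}$ and $\dv J(\dv A\dv X) = \dv A^{-\dagger}\dv J(\dv X)\dv A^{-1}$. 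This lets me normalize $\dv J(\dv X)$ to a convenient form before estimating. The inequality itself then follows from the scalar entropy--Fisher bound lifted to the determinant setting, e.g.\ via the de Bruijn identity expressing $\tfrac{d}{dt}h(\dv X_t)$ in terms of $\dv J(\dv X_t)$ for the Gaussian-perturbed $\dv X_t = \dv X + \sqrt{t}\,\dv Z$, together with Stam's Fisher-information inequality controlling $\dv J(\dv X_t)$ along the flow and an integration from $t=0$ to $t=\infty$. This heat-flow argument is the main obstacle: the elementary Cram\'er--Rao bound $\dv\Sigma_{\dv x}\succeq \dv J^{-1}(\dv X)$ combined with maximum entropy goes the wrong way, bounding $h$ from above rather than below, so the lower bound genuinely requires the entropy-power/Fisher machinery rather than a one-line estimate. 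Since the statement is quoted directly from \cite{DCT91,EU14}, in the write-up I would cite it at this point rather than reproduce the full computation.
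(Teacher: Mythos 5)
Your proposal is correct, and the first thing to note is that there is nothing in the paper to compare it against: the paper states this lemma as an imported result, citing \cite{DCT91,EU14}, and gives no proof of it, so your write-up supplies strictly more detail than the paper does. Your two-part decomposition is in fact the standard route taken in those references. The upper bound via $h(\dv X|\dv Y)=h(\dv X-\mathbb{E}[\dv X|\dv Y]\,|\,\dv Y)\leq h(\dv X-\mathbb{E}[\dv X|\dv Y])$ followed by the complex Gaussian maximum-entropy bound (the $\pi e$ and $\dagger$ in the statement indicate the circularly-symmetric setting) is airtight. For the lower bound, your reduction is also sound, and the one step that deserves a second look, the direction of Jensen, works in your favor: concavity of $\log|\cdot|$ gives $\mathbb{E}_{\dv Y}\big[\log|\dv J(\dv X|\dv Y=\dv y)|\big]\leq\log\big|\mathbb{E}_{\dv Y}[\dv J(\dv X|\dv Y=\dv y)]\big|=\log|\dv J(\dv X|\dv Y)|$, and since this quantity enters the bound with a minus sign, averaging the per-$\dv y$ inequalities does yield the claimed conditional bound; your identification of $\dv J(\dv X|\dv Y)$ with $\mathbb{E}_{\dv Y}[\dv J(\dv X|\dv Y=\dv y)]$ is exactly what the lemma's definition gives after performing the inner expectation over $\dv X$ given $\dv Y=\dv y$. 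The remaining ingredient, the unconditional inequality $h(\dv X)\geq\log|(\pi e)\dv J^{-1}(\dv X)|$, is genuinely the hard core (your remark that Cram\'er--Rao plus maximum entropy points the wrong way is exactly right), and deferring it to \cite{DCT91} with the de Bruijn/Stam heat-flow sketch is both legitimate and consistent with what the paper itself does, since the paper leans on the same citation for the entire lemma.
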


Fix $q\in \mc{Q}$, $\mc S \subseteq \mc Q$. Also, let $\dv 0 \preceq \dv\Omega_{k,q} \preceq \dv\Sigma_k^{-1}$ and  
\begin{equation}~\label{equation-outer-1-mmse}
\mathrm{mmse}(\dv Y_k|\dv X, U_{k,q}, \dv Y_0, q) = \dv\Sigma_k - \dv\Sigma_k \dv\Omega_{k,q} \dv\Sigma_k.
\end{equation}
Such $\dv\Omega_{k,q}$ always exists since 
\begin{equation*}~\label{equation-outer-2-covariance-noise}
\dv 0 \preceq \mathrm{mmse}(\dv Y_k|\dv X,U_{k,q},\dv Y_0,q) \preceq \dv\Sigma_{\dv y_k|(\dv x, \dv y_0)} = \dv\Sigma_k.
\end{equation*} 

\noindent Then, we have
\begin{align}
& I(\dv Y_k; U_k|\dv X, \dv Y_0, Q=q) \nonumber\\
&\qquad =  \log|(\pi e)\dv\Sigma_k| - h(\dv Y_k|\dv X, U_{k,q}, \dv Y_0, Q=q) \nonumber\\
&\qquad \stackrel{(a)}{\geq} \log|\dv\Sigma_k| - \log|\mathrm{mmse}(\dv Y_k|\dv X, U_{k,q}, \dv Y_0, Q=q)| \nonumber\\
&\qquad \stackrel{(b)}{=} -\log|\dv I- \dv\Omega_{k,q}\dv\Sigma_k| \label{equation-Gausss-CEO-first-inequality-q}
\end{align}
where $(a)$ is due to Lemma~\ref{lemma-fisher}; and $(b)$ is due to~\eqref{equation-outer-1-mmse}.

\noindent Now, let the matrix $\dv\Lambda_{\bar{\mc S},q}$ be defined as  
\begin{align}~\label{equation-definition-Tq}
\dv\Lambda_{\bar{\mc S},q} :=
\begin{bmatrix}
\dv 0 & \dv 0 \\
\dv 0 & \mathrm{diag}(\{ \dv\Sigma_k - \dv\Sigma_k \dv\Omega_{k,q} \dv\Sigma_k \}_{k\in\mc S^c})
\end{bmatrix}. 
\end{align}
Then, we have 
\begin{align}
& I(U_{\mc S^c}; \dv X|\dv Y_0,Q=q) = h(\dv X|\dv Y_0) - h(\dv X|U_{S^c,q}, \dv Y_0, Q=q) \nonumber\\
 &\: \stackrel{(a)}{\leq} h(\dv X|\dv Y_0) - \log|(\pi e) \dv J^{-1}(\dv X| \dv U_{S^c,q}, \dv Y_0, q)| \nonumber\\
&\: \stackrel{(b)}{=} h(\dv X|\dv Y_0) \nonumber\\
&\qquad - \log\left| (\pi e) \left( \dv\Sigma_{\dv x}^{-1} + \dv H_{\bar{\mc S}}^\dagger \dv\Sigma_{\dv n_{\bar{\mc S}}}^{-1} \big( \dv I - \dv\Lambda_{\bar{\mc S},q} \dv\Sigma_{\dv n_{\bar{\mc S}}}^{-1} \big) \dv H_{\bar{\mc S}} \right)^{-1} \right| \label{equation-Gausss-CEO-second-inequality-q}
\end{align}
where $(a)$ follows by using Lemma~\ref{lemma-fisher}; and for $(b)$ holds by using the equality 
\begin{equation}~\label{equation-Fisher-equality}
\dv J(\dv X|U_{S^c,q}, \dv Y_0, q) = \dv\Sigma_{\dv x}^{-1} + \dv H_{\bar{\mc S}}^\dagger \dv\Sigma_{\dv n_{\bar{\mc S}}}^{-1} \big( \dv I - \dv\Lambda_{\bar{\mc S},q} \dv\Sigma_{\dv n_{\bar{\mc S}}}^{-1} \big) \dv H_{\bar{\mc S}}. 
\end{equation}
the proof of which uses a connection between MMSE and Fisher information as shown next. More precisely, for the proof of~\eqref{equation-Fisher-equality} first recall de Brujin identity which relates Fisher information and MMSE. 

\begin{lemma}{\cite{EU14}}~\label{lemma-Brujin}
Let $(\dv V_1,\dv V_2)$ be a random vector with finite second moments and $\dv Z\sim\mc{CN}(\dv 0, \dv\Sigma_{\dv z})$ independent of $(\dv V_1,\dv V_2)$. Then
\begin{equation*}
\mathrm{mmse}(\dv V_2|\dv V_1,\dv V_2+\dv Z) = \dv\Sigma_{\dv z} - \dv\Sigma_{\dv z} \dv J(\dv V_2+\dv Z|\dv V_1) \dv\Sigma_{\dv z}. \bqed 
\end{equation*}
\end{lemma}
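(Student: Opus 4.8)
The plan is to condition on $\dv V_1 = \dv v_1$, which reduces the claim to a matrix de Bruijn (Tweedie-type) identity for the single additive-Gaussian-noise channel $\dv Y := \dv V_2 + \dv Z$, and then to average the resulting identity back over $\dv V_1$. The central object is the conditional score $\dv s := \nabla_{\dv y}\log p_{\dv Y|\dv V_1}(\dv Y|\dv v_1)$, which for a Gaussian-noise channel is pinned to the conditional-mean estimator of the noise.

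First I would differentiate the convolution $p_{\dv Y|\dv V_1}(\dv y|\dv v_1) = \int p_{\dv V_2|\dv V_1}(\dv v_2|\dv v_1)\, p_{\dv Z}(\dv y - \dv v_2)\, d\dv v_2$ under the integral sign and use the Gaussian score identity $\nabla_{\dv z}\log p_{\dv Z}(\dv z) = -\dv\Sigma_{\dv z}^{-1}\dv z$ to obtain the conditional-mean formula
\[
\mathbb{E}[\dv Z \mid \dv Y, \dv v_1] = -\,\dv\Sigma_{\dv z}\,\dv s, \qquad \text{equivalently}\qquad \mathbb{E}[\dv V_2 \mid \dv Y, \dv v_1] = \dv Y + \dv\Sigma_{\dv z}\,\dv s .
\]
Consequently the estimation error is $\dv e := \dv V_2 - \mathbb{E}[\dv V_2\mid \dv Y, \dv v_1] = -\big(\dv Z + \dv\Sigma_{\dv z}\,\dv s\big)$.

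Next I would expand the conditional error covariance $\mathbb{E}[\dv e\,\dv e^{\dagger}\mid \dv v_1]$ into its four terms. Using $\mathbb{E}[\dv Z\dv Z^{\dagger}] = \dv\Sigma_{\dv z}$, the definition $\dv J(\dv Y|\dv v_1) = \mathbb{E}[\dv s\,\dv s^{\dagger}\mid \dv v_1]$ from Lemma~\ref{lemma-fisher}, and the tower property over $\dv Y$ combined with the conditional-mean formula, the cross-term evaluates as
\[
\mathbb{E}[\dv Z\, \dv s^{\dagger}\mid \dv v_1] = -\,\dv\Sigma_{\dv z}\,\dv J(\dv Y|\dv v_1),
\]
and the remaining cross-term is its conjugate transpose. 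Since $\dv\Sigma_{\dv z}$ and $\dv J(\dv Y|\dv v_1)$ are Hermitian, the two cross-terms $-\dv\Sigma_{\dv z}\dv J(\dv Y|\dv v_1)\dv\Sigma_{\dv z}$ and the quadratic term $+\dv\Sigma_{\dv z}\dv J(\dv Y|\dv v_1)\dv\Sigma_{\dv z}$ collapse, leaving $\mathrm{mmse}(\dv V_2\mid \dv Y, \dv v_1) = \dv\Sigma_{\dv z} - \dv\Sigma_{\dv z}\,\dv J(\dv Y|\dv v_1)\,\dv\Sigma_{\dv z}$.

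Finally I would take the expectation over $\dv V_1$: the left side becomes $\mathrm{mmse}(\dv V_2\mid \dv V_1, \dv V_2+\dv Z)$, while $\mathbb{E}_{\dv v_1}[\dv J(\dv Y|\dv v_1)]$ is exactly the conditional Fisher information $\dv J(\dv V_2+\dv Z\mid \dv V_1)$ in the sense of Lemma~\ref{lemma-fisher}, which yields the stated identity. I expect the main obstacle to be the rigorous justification of the first step --- differentiating under the integral sign and the attendant integration by parts in the complex ($\mc{CN}$, Wirtinger-gradient) setting --- which is precisely where the finite-second-moment hypothesis and the smoothness and rapid tail decay of the Gaussian kernel are needed; the secondary point to watch is keeping the conjugate-transpose bookkeeping consistent so that the three matrix products combine into a single $-\dv\Sigma_{\dv z}\dv J\dv\Sigma_{\dv z}$.
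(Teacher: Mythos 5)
The paper does not prove this lemma; it imports it verbatim from the cited reference \cite{EU14}, so there is no in-paper argument to compare against. Your derivation is correct and is essentially the canonical proof of this MMSE--Fisher-information identity: the Gaussian-channel score identity $\mathbb{E}[\dv Z\mid \dv V_2+\dv Z,\dv v_1]=-\dv\Sigma_{\dv z}\,\nabla\log p(\dv V_2+\dv Z\mid\dv v_1)$ obtained by differentiating the convolution, the four-term expansion of the error covariance with the tower property collapsing the cross-terms, and the final average over $\dv V_1$, which matches the definition of $\dv J(\cdot\mid\cdot)$ in Lemma~\ref{lemma-fisher} as an expectation over both arguments. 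The only points worth making explicit are that the argument needs $\dv\Sigma_{\dv z}\succ\dv 0$ for the score $-\dv\Sigma_{\dv z}^{-1}\dv z$ to be defined (which holds where the lemma is invoked, since $\dv\Sigma_{\dv w_{\bar{\mc S}}}$ in~\eqref{equation-covariance-w} is nonsingular), and that the Wirtinger-derivative convention used for the circularly symmetric complex Gaussian must be the same one under which the left inequality of Lemma~\ref{lemma-fisher} holds --- both of which you correctly flag as the technical care points rather than gaps.
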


From MMSE estimation of Gaussian random vectors, we have  
\begin{equation}~\label{equation-outer-3}
\dv X = \mathbb{E} [\dv X|\dv Y_{\bar{\mc S}}] + \dv W_{\bar{\mc S}} = \dv G_{\bar{\mc S}} \dv Y_{\bar{\mc S}} + \dv W_{\bar{\mc S}} 
\end{equation}
where $\dv G_{\bar{\mc S}} := \dv\Sigma_{\dv w_{\bar{\mc S}}} \dv H_{\bar{\mc S}}^\dagger \dv\Sigma_{\dv n_{\bar{\mc S}}}^{-1}$, and $\dv W_{\bar{\mc S}} \sim \mathcal{CN}(\dv 0, \dv\Sigma_{\dv w_{\bar{\mc S}}})$ is a Gaussian vector that is independent of $\dv Y_{\bar{\mc S}}$ and  
\begin{equation}~\label{equation-covariance-w}
\dv\Sigma_{\dv w_{\bar{\mc S}}}^{-1} :=  \dv\Sigma_{\dv x}^{-1} + \dv H_{\bar{\mc S}}^\dagger \dv\Sigma_{\dv n_{\bar{\mc S}}}^{-1} \dv H_{\bar{\mc S}}. 
\end{equation}

\noindent Next, we show that the cross-terms of $\mathrm{mmse}\left(\dv Y_{\mc S^c}|\dv X,U_{\mc S^{c},q},\dv Y_0,q \right)$ are zero. For $i\in\mc S^c$ and $j\neq i$, we have 
\begin{align}
& \mathbb{E} \big[(Y_i-\mathbb{E}[Y_i|\dv X,U_{\mc S^{c},q},\dv Y_0,q])(Y_j-\mathbb{E}[Y_j|\dv X,U_{\mc S^{c},q},\dv Y_0,q])^\dagger\big] \nonumber\\
&\stackrel{(a)}{=}  \mathbb{E}\bigg[ \mathbb{E}\big[ (Y_i-\mathbb{E}[Y_i|\dv X,U_{\mc S^{c},q},\dv Y_0,q])\nonumber\\
& \qquad \qquad {\times} (Y_j-\mathbb{E}[Y_j|\dv X,U_{\mc S^{c},q},\dv Y_0,q])^\dagger|\dv X,\dv Y_0 \big] \bigg] \nonumber\\
&\stackrel{(b)}{=}  \mathbb{E}\bigg[ \mathbb{E}\big[ (Y_i-\mathbb{E}[Y_i|\dv X,U_{\mc S^{c},q},\dv Y_0,q])|\dv X,\dv Y_0 \big] \nonumber\\
&\qquad \qquad {\times}\mathbb{E}\big[(Y_j-\mathbb{E}[Y_j|\dv X,U_{\mc S^{c},q},\dv Y_0,q])^\dagger|\dv X,\dv Y_0 \big] \bigg] \nonumber\\
& = \dv 0,
\label{equation-cross-terms}
\end{align}
where $(a)$ is due to the law of total expectation; $(b)$ is due to the Markov chain $\dv Y_k \mkv (\dv X,\dv Y_0) \mkv \dv Y_{\mc K \setminus k}$. Then, we have
\begin{align}
& \mathrm{mmse}\big(\dv G_{\bar{\mc S}} \dv Y_{\bar{\mc S}} \big|\dv X,U_{\mc S^c,q},\dv Y_0,q \big) \nonumber\\
& \qquad = \dv G_{\bar{\mc S}}  \: \mathrm{mmse}\left(\dv Y_{\bar{\mc S}}|\dv X,U_{\mc S^c,q},\dv Y_0,q \right) \dv G_{\bar{\mc S}}^\dagger \nonumber\\
&\qquad \stackrel{(a)}{=} 
\dv G_{\bar{\mc S}}
\begin{bmatrix}
\dv 0 & \dv 0 \\
\dv 0 & \mathrm{diag}(\{\mathrm{mmse}(\dv Y_k|\dv X,U_{\mc S^c,q},\dv Y_0,q)\}_{k\in\mc S^c})
\end{bmatrix}
\dv G_{\bar{\mc S}}^\dagger \nonumber\\ 
&\qquad \stackrel{(b)}{=} \dv G_{\bar{\mc S}} \dv\Lambda_{\bar{\mc S},q} \dv G_{\bar{\mc S}}^\dagger \label{equation-outer-4}
\end{align}
where $(a)$ follows since the cross-terms are zero as shown in~\eqref{equation-cross-terms}; and $(b)$ follows due to~\eqref{equation-outer-1-mmse} and the definition of $\dv\Lambda_{\bar{\mc S},q}$ given in~\eqref{equation-definition-Tq}.

\vspace{0.2cm}

\noindent We note that $\dv W_{\bar{\mc S}}$ is independent of $\dv Y_{\bar{\mc S}}=(\dv Y_0, \dv Y_{{\mc S}^c})$; and, with the Markov chain $U_{{\mc S}^c} \mkv \dv Y_{{\mc S}^c} \mkv (\dv X, \dv Y_0)$, which itself implies $U_{{\mc S}^c} \mkv \dv Y_{{\mc S}^c} \mkv (\dv X, \dv Y_0,\dv W_{\bar{\mc S}})$, this yields that $\dv W_{\bar{\mc S}}$ is independent of $U_{{\mc S}^c}$. Thus, $\dv W_{\bar{\mc S}}$ is independent of $(\dv G_{\bar{\mc S}} \dv Y_{\bar{\mc S}},U_{{\mc S}^c},\dv Y_0,Q)$. Applying Lemma~\ref{lemma-Brujin} with $\dv V_1 := (U_{{\mc S}^c},\dv Y_0,Q)$, $\dv V2 :=\dv G_{\bar{\mc S}} \dv Y_{\bar{\mc S}}$ and $\dv Z := \dv W_{\bar{\mc S}}$, we get   
\begin{align*}    
& \dv J(\dv X|U_{S^c,q},\dv Y_0,q) \nonumber\\
&= \dv\Sigma_{\dv w_{\bar{\mc S}}}^{-1} - \dv\Sigma_{\dv w_{\bar{\mc S}}}^{-1} \: \mathrm{mmse} \big( \dv G_{\bar{\mc S}} \dv Y_{\bar{\mc S}} \big| \dv X,U_{\mc S^{c},q},\dv Y_0,q \big) \dv\Sigma_{\dv w_{\bar{\mc S}}}^{-1} \\
&\qquad \stackrel{(a)}{=} \dv\Sigma_{\dv w_{\bar{\mc S}}}^{-1} - \dv\Sigma_{\dv w_{\bar{\mc S}}}^{-1} \dv G_{\bar{\mc S}} \dv\Lambda_{\bar{\mc S},q} \dv G_{\bar{\mc S}}^\dagger \dv\Sigma_{\dv w_{\bar{\mc S}}}^{-1} \\
&\qquad \stackrel{(b)}{=} \dv\Sigma_{\dv x}^{-1} + \dv H_{\bar{\mc S}}^\dagger \dv\Sigma_{\dv n_{\bar{\mc S}}}^{-1} \dv H_{\bar{\mc S}} -  \dv H_{\bar{\mc S}}^\dagger \dv\Sigma_{\dv n_{\bar{\mc S}}}^{-1} \dv\Lambda_{\bar{\mc S},q} \dv\Sigma_{\dv n_{\bar{\mc S}}}^{-1} \dv H_{\bar{\mc S}}\\
&\qquad = \dv\Sigma_{\dv x}^{-1} + \dv H_{\bar{\mc S}}^\dagger \dv\Sigma_{\dv n_{\bar{\mc S}}}^{-1} \big( \dv I - \dv\Lambda_{\bar{\mc S},q} \dv\Sigma_{\dv n_{\bar{\mc S}}}^{-1}  \big) \dv H_{\bar{\mc S}\bar{\mc S}},
\end{align*}
where $(a)$ is due to~\eqref{equation-outer-4}; and $(b)$ follows due to the definitions of $\dv\Sigma_{\dv w_{\bar{\mc S}}}^{-1}$ and $\dv G_{\bar{\mc S}}$.

\noindent Next, we average the expression in ~\eqref{equation-Gausss-CEO-first-inequality-q} and~\eqref{equation-Gausss-CEO-second-inequality-q} over the time-sharing $Q$ and letting $\dv\Omega_k := \sum_{q\in \mc Q}p(q) \dv\Omega_{k,q}$, we obtain the lower bound
\begin{align}
I(\dv Y_k;\dv U_k|\dv X,\dv Y_0,Q) &= \sum_{q \in \mc Q} p(q) I(\dv Y_k;\dv U_k|\dv X,\dv Y_0,Q=q)  \nonumber\\
&\stackrel{(a)}{\geq} - \sum_{q \in \mc Q} p(q) \log|\dv I- \dv\Omega_{k,q} \dv\Sigma_k| \nonumber\\
&\stackrel{(b)}{\geq} -\log |\dv I - \sum_{q \in \mc Q} p(q) \dv\Omega_{k,q} \dv\Sigma_k| \nonumber\\  
&= -\log |\dv I- \dv\Omega_k \dv\Sigma_k| \label{equation-Gausss-CEO-first-inequality}
\end{align}
where $(a)$ follows from~\eqref{equation-Gausss-CEO-first-inequality-q}; and $(b)$ follows from the concavity of the log-det function and Jensen's Inequality. 

\noindent Besides,  we have
\begin{align}
& I(U_{\mc S^c}; \dv X|\dv Y_0,Q=q) = h(\dv X|\dv Y_0) - \sum_{q \in \mc Q} p(q) h(\dv X|U_{S^c,q}, \dv Y_0, Q=q) \nonumber\\ 
&\stackrel{(a)}{\leq} h(\dv X|\dv Y_0) \nonumber\\
& \qquad - \sum_{q \in \mc Q} p(q) \log\left| (\pi e) \left(\dv\Sigma_{\dv x}^{-1} + \dv H_{\bar{\mc S}}^\dagger \dv\Sigma_{\dv n_{\bar{\mc S}}}^{-1} \big( \dv I - \dv\Lambda_{\bar{\mc S},q} \dv\Sigma_{\dv n_{\bar{\mc S}}}^{-1}  \big) \dv H_{\bar{\mc S}} \right)^{-1} \right| \nonumber\\
&\stackrel{(b)}{\leq} h(\dv X|\dv Y_0) - \log\left| (\pi e) \left( \dv\Sigma_{\dv x}^{-1} + \dv H_{\bar{\mc S}}^\dagger \dv\Sigma_{\dv n_{\bar{\mc S}}}^{-1} \big( \dv I - \dv\Lambda_{\bar{\mc S}} \dv\Sigma_{\dv n_{\bar{\mc S}}}^{-1}  \big) \dv H_{\bar{\mc S}} \right)^{-1} \right|, \label{equation-Gausss-CEO-second-inequality}  
\end{align}
where $(a)$ is due to~\eqref{equation-Gausss-CEO-second-inequality-q}; and $(b)$ is due to the concavity of the log-det function and Jensen's inequality and the definition of $\dv\Lambda_{\bar{\mc S}}$ given in~\eqref{equation-definition-T}. 

Finally, combining~\eqref{equation-Gausss-CEO-first-inequality} and~\eqref{equation-Gausss-CEO-second-inequality}, noting that $\dv\Omega_k = \sum_{q\in \mathcal{Q}}p(q) \dv\Omega_{k,q} \preceq\mathbf{\Sigma}_{k}^{-1}$ since $\mathbf{0} \preceq \dv\Omega_{k,q} \preceq\mathbf{\Sigma}_{k}^{-1}$, and taking the union over $\dv\Omega_k$ satisfying $\mathbf{0} \preceq \dv\Omega_k \preceq\mathbf{\Sigma}_{k}^{-1}$.

\bibliographystyle{IEEEtran}
\bibliography{draft-paper-ht-arxiv-version}
\end{document}